\theoremstyle{plain}
\newtheorem{theorem}{Theorem} 
\newtheorem{lemma}[theorem]{Lemma}
\newtheorem{corollary}[theorem]{Corollary}
\newtheorem{conjecture}{Conjecture}
\theoremstyle{definition}
\theoremstyle{remark}
\def\DeltaC{$\Delta {C}$}
\def\DeltaCSysEq{\Delta {C}_{\mathrm{sys}}}
\def\DeltaCSys{$\DeltaCSysEq$}
\def\ovii{O~VII}
\def\1es{1ES~1553+113}
\def\dof{degrees--of--freedom}
\def\apj{Astrophys. J.}
\def\cmin{$C_{\mathrm{min}}$}
\def\cmineq{C_{\mathrm{min}}}
\def\cminsys{$C_{\mathrm{min,sys}}$}
\def\DeltaC{$\Delta C$}
\newcommand\Var{\text{Var}}
\newcommand\E{\text{E}}
\def\Laplace{\mathcal{L}}
\def\erf{\text{erf}}
\def\sigmaIntRelEq{\sigma_{\mathrm{int},i}/\hat{\mu}_i}
\def\paperOne{Paper~I}
\def\sigmaIntiEq{\sigma_{\mathrm{int},i}}
\def\sigmaIntRelEq{\sigma_{\mathrm{int},i}/\hat{\mu}_i}
\def\gof{goodness--of--fit}
\shorttitle{Poisson regression with systematic errors II}
\shortauthors{Bonamente et al.}
\begin{document}

  \title{Maximum--likelihood regression with systematic errors for astronomy and the physical sciences:\\
  II. Hypothesis testing of nested model components for Poisson data}

\author[0000-0002-8597-9742]{Massimiliano Bonamente}\affiliation{Department of Physics and Astronomy, University of Alabama in Huntsville, Huntsville, AL 35899}
\author[0000-0003-1212-4089]{Dale  Zimmerman}\affiliation{Department of Statistics and Actuarial Science,
University of Iowa, Iowa City, IA 52242}
\author[0000-0002-9516-8134]{Yang Chen}\affiliation{Department of Statistics, University of Michigan, Ann Arbor, MI 48109}


\begin{abstract}
A novel model of systematic errors for the regression of Poisson data is applied to hypothesis testing
of nested model components with the introduction of a generalization of the 
\DeltaC\ statistic that applies in the presence of systematic errors. 
This paper shows that the null--hypothesis parent distribution of this \DeltaCSys\ statistic can be obtained 
either through a simple numerical procedure, or in a closed form
by making certain simplifying assumptions. It is found that the effects of systematic errors
on the test 
statistic can be significant, 
and therefore the inclusion of sources of systematic errors is crucial
for the assessment of the significance of nested model component in practical applications. The methods proposed in this paper provide a simple and accurate means of 
including systematic errors for hypothesis testing of nested model components in a
variety of applications.
\end{abstract}

\keywords{Astrostatistics(1882); Regression(1914); Maximum likelihood estimation(1901); Poisson distribution(1898); Parametric hypothesis tests(1904); Measurement error model(1946)}


\section{Introduction}

The regression of integer--count Poisson data to a parametric model occupies a central
role in the analysis of data for astronomy and the physical sciences \citep[see, e.g.,][]{james2006}. 
The \gof\ statistic
for hypothesis testing and parameter estimation is usually referred to as the Poisson \emph{deviance}  in the statistical
literature \citep[e.g.][]{cameron2013, bishop1975, goodman1969},
and it is generally known as the \emph{Cash} statistic \cmin\ for astronomical applications 
\citep{cash1976, cash1979, baker1984}. The asymptotic distribution of the fit statistic under the null hypothesis
that the parametric model is correct is available in the large--count regime, where it is 
distributed as a chi--squared variable; 
and in the extensive data regime where, regardless of number of counts, it is asymptotically distributed 
like a normal variable \citep{li2024}. Therefore, for most data analysis cases, the regression
of Poisson data and the associated parameter estimation and hypothesis testing can be performed without complications.

A new method for the inclusion of sources of systematic uncertainty in the maximum--likelihood 
regression of Poisson
data to a parametric model was introduced in a companion paper \citep{bonamente2024}, hereafter referred to as \paperOne. The method consists of introducing an \emph{intrinsic model variance} that characterizes
systematic uncertainties in the best--fit model. The main advantages of this method are the ability 
to use the familiar \cmin\ statistic to obtain the best--fit model parameters \citep{cash1976, cash1979}, and then
generalize the \gof\ statistic to a new statistic \cminsys\ that has simple analytic properties.
As a result, hypothesis testing in the presence of systematic error becomes a simple task that can be accomplished
for virtually any integer--count Poisson data sets, same as for the original \cmin\ statistic.

Another common data analysis task is the assessment of the significance of a nested model component.
This is particularly common for astronomy and the physical sciences, where certain parametric components
represent a physically--motivated modification of a baseline model, such as the presence
of an emission or absorption line superimposed to an underlying continuum \citep[e.g.][]{nicastro2018, spence2023},
or the presence of an exponential cutoff at high energy that modifies
a continuum emission mechanism \citep[e.g.][]{tang2015}. 
Following the statistical method of systematic errors presented in \paperOne, this paper focuses
on the distribution of the associated \DeltaC\ statistic for hypothesis testing
of nested model components in the presence of systematic errors, which is hereafter referred to as \DeltaCSys. 

{The statistics literature provides a much broader context for the
regression with count data, beyond the simple equidispersed Poisson distribution
considered in this paper. In particular, alternative methods of regression with overdispersed distributions such as the negative binomial \citep[e.g.][]{hilbe2011,hilbe2014}, or the Poisson--inverse--Gaussian \citep[e.g.][]{tweedie1957, sichel1971} distributions were reviewed in Sec.~3.2 of \paperOne. The choice to focus on the Poisson distribution was made primarily because of its relative simplicity and the availability of a \gof\ statistic that is tractable, as discussed in \paperOne\ and also reviewed below in Sec.~\ref{sec:paperOneSummary} of this paper.}

This paper
is structured as follows: Sec.~\ref{sec:DeltaCIntro} describes the  \DeltaCSys\ statistic, including a review of certain results from \paperOne. Sec.~\ref{sec:DeltaY}
discusses the distribution of the $\Delta Y$ component of the $\Delta C$ statistic that is introduced 
by the presence of systematic errors. Sec.~\ref{sec:DeltaC} then provides the distribution
for \DeltaCSys, including
 Monte Carlo simulations and 
analytical approximations.
Sec.~\ref{sec:HypothesisTesting} provides a discussion of hypothesis testing for a nested model
component with the $\Delta C$ statistic and a case study with real--life astronomical data.
Conclusions are provided in Sec.~\ref{sec:conclusions}.

\section{The $\Delta C$ statistic for nested model components}
\label{sec:DeltaCIntro}

A common problem in statistical data analysis is assessing the significance of a nested model component in
the maximum--likelihood regression to a parametric model. In this paper we focus on Poisson regression, with a
\gof\ statistic \cmin, and  \DeltaC\ as the statistic of choice for nested model components. This section
provides a definition of the relevant statistics and a brief overview of the results from \paperOne\ 
for the inclusion of systematic errors.

\subsection{The Poisson deviance \cmin\ and the $\Delta C$ statistic}

The data model is of the type $(x_i, y_i)$, for $ i=1,\dots,N$ independent
Poisson--distributed measurements $y_i$ at different values of an independent variable $x_i$,
where $\theta=(\theta_1,\dots,\theta_m)$ are the $m$ adjustable parameters of the model,
and $\hat{\mu}=(\hat{\mu}_1,\dots,\hat{\mu}_N)$ the means of the Poisson distributions evaluated at the best--fit parameter values, same as in \paperOne.

In the regression of integer--count Poisson data to a parametric model, it is customary to use the deviance
\begin{equation}
  D_P=2 (\mathcal{L}(y)-\mathcal{L}(\hat{\mu})) = \sum_{i=1}^N \left( y_i \ln \left(\dfrac{y_i}{\hat{\mu}_i}\right) - (y_i-\hat{\mu}_i) \right)
  \label{eq:DP}
\end{equation}
as the goodness--of--fit statistic, which is twice
the difference between the maximum achievable log--likelihood $\mathcal{L}(y)$ and that of the fitted model. This statistic is known as the Cash statistic
and usually indicated as $\cmineq \coloneqq D_P$. This \gof\ statistic
 was spearheaded by \cite{cash1976} and \cite{cash1979} and others \citep[e.g.][]{baker1984} in response
to a wealth of new high--energy astronomy data that were collected by photon--counting devices,
starting with the early satellite--based X--rays surveys of the 1970's \citep[e.g.][]{giacconi1971,giacconi1972}.

This paper extends the statistical model defined in \paperOne\ to the statistic
\begin{equation}
  \Delta D_P = D_P(\theta^T_{k},\hat{\theta}_{m-k}) - D_P(\hat{\theta}_m) \coloneqq \Delta C
  \label{eq:DeltaDP}
\end{equation}
where $ D_P(\hat{\theta}_m)$ is the usual $D_P$ statistic where all of the $m$ parameters are fit to the
data, as in \eqref{eq:DP}, while $D_P(\theta^T_{k},\hat{\theta}_{m-k})$ is evaluated with $k \leq m$ parameters
held fixed at the true--yet--unknown parent values. The $\Delta D_P$ statistic can be written
in a simplified notation as
\begin{equation}
    \Delta C = C_{\mathrm{min,r}}-C_{\mathrm{min}},
    \label{eq:DeltaC}
\end{equation} 
and it is  hereafter 
referred to as  $\Delta C$. 
Tests for the significance of a nested model component with $k$ additional parameters in the ML regression with Poisson data are performed via this statistic, 
where 'r' labels the reduced model with fewer adjustable parameters (i.e., $m-k$ parameters), and for simplicity the full model 
with $m$ parameters will be hereafter indicated without subscripts.
The $\Delta C$ statistic is commonly used in physics and astronomy
applications, where a majority of data are photon counts that are conveniently modelled
by a Poisson process of constant rate \citep{cash1979}.

\subsection{Nested components and likelihood--ratio statistics}
A nested model component with $k \geq 1$ parameters is defined as a portion of the full model with $m$ parameters, 
with the feature that the full model becomes the reduced model by a suitable choice (or null values) of the $k$
parameters, often setting them to zero or to another fixed value. For example, a broken--power law model 
is a full model with $m=4$ parameters (e.g., a normalization, two power--law indices and the location of the break) that
becomes a power--law model with $m-k=2$ parameters with a suitable choice of the $k=2$ parameters in the
nested component (the second power--law index and the break). 

In the absence of systematic errors, the Wilks theorem  \citep[see][]{wilks1938,wilks1943,rao1973}
guarantees that $\Delta C \sim \chi^2(k)$.
There are a number of mathematical conditions
for the applicability of the null--hypothesis distribution of $\Delta C$, which is a likelihood--ratio statistic \citep[for a recent review, see][or \paperOne]{li2024}. Of particular relevance to applications in the physical sciences and astronomy is
the topology of parameter space, in particular the requirement that
the null values of the nested model component \emph{cannot} be at the boundary
of the allowed parameter space. 

A typical example of a parameter whose null value is at the boundary of
the allowed parameter space  is an emission (or absorption) line which is only allowed, respectively,
 a positive or a negative normalization. In this case, the Wilks theorem does not apply \citep[e.g.][]{chernoff1954}, as illustrated in an astrophysical context by \cite{protassov2002}. On the other hand,
 a model for an absorption/emission line that allows both positive and negative normalization
 \citep[e.g., as illustrated in][]{spence2023} follows the required topological constraints, as it does
 a typical parameterization of a broken power--law model.


\subsection{Review of key results from \paperOne}
\label{sec:paperOneSummary}

In order to provide a self--contained description of the present results for the $\Delta C$ statistic,
a brief summary of the key results from \paperOne\ for the \cmin\ statistic is presented in this section.
Systematic errors are modelled via an intrinsic model variance $\sigmaIntiEq^2$ which is defined as the variance of a random variable $M_i$ that describes the distribution of the best--fit model $\hat{f}_i(x_i)$, namely with 
\begin{equation}
\begin{cases}
  \E(M_i)=\hat{\mu}_i\\
  \Var(M_i)=\sigmaIntiEq^2,
  \end{cases}
  \label{eq:sigmaInt}
\end{equation}
where $\hat{\mu}_i$ is the usual best--fit value according to the ML regression, and  $f_i=\sigmaIntRelEq  \ll 1$ models the relative value of the 
systematic error.~\footnote{For an illustration, see Fig.~1 of \paperOne.}
The purpose of this model variance is to model fluctuations or overdispersion in the data by treating the best--fit value as a random variable, rather than a number. 
The shape of the distribution for $M_i$ can be chosen at will, with the positive--valued gamma distribution being a reasonable choice, although the normal distribution can be used as well in most applications in the large--mean regime. { Notice how, for example, the compounding of a gamma with a Poisson distribution leads to a negative binomial distribution, as also explained in Sec.~3.2 of \paperOne, which provides the kind of overdispersion introduced by systematics that is envisioned by this method.}

{ We use a quasi--maximum likelihood method \citep[e.g.][]{cameron2013,gourieroux1984,gourieroux1984b} that 
retains the usual Poisson log--likelihood to estimate the parameters, and \eqref{eq:sigmaInt} are enforced \emph{post-facto} to determine the \gof\ statistic in the presence of systematics (see Sec.~4.1 of \paperOne).}
Accordingly,  we have
shown that 
the $\cmineq\coloneqq X+Y=Z$ statistic is the sum of two statistics, where $X\sim \chi^2(\nu)$  is 
the usual \cmin\ statistic in
the absence of systematic errors with $\nu=N-m$, and $Y$ is
the additional independent contribution due to systematic errors,
\begin{equation}
\begin{aligned}
    Y = 
    2 \sum_{i=1}^N (M_i-\hat{\mu}_i)-y_i \ln \left(\dfrac{M_i}{\hat{\mu}_i}\right),
    \end{aligned}
    \label{eq:YStat}
\end{equation}
which vanishes if $M_i$ is identically equal to $\hat{\mu}_i$, e.g., in the absence of systematic errors.
This statistic is
 is asymptotically distributed in the extensive regime  as
\begin{equation}
    Y\sim N(\hat{\mu}_C,\hat{\sigma}^2_{C}).
\end{equation}
The parameters $\hat{\mu}_C$ and $\hat{\sigma}^2_{C}$ are referred to as respectively the bias and overdisperion parameter, and they
 can be estimated from the data. 
The normal distribution for $Y$ applies regardless of choice for the $M_i$ random variable, provided the data are in the extensive regime, which is the assumption used throughout. The choice of distribution for $M_i$ only marginally affects the estimation of the
overdispersion parameter, and therefore it is not a crucial one.

In general, the distribution of $Z$ follows an \emph{overdispersed $\chi^2$ distribution} that is
the convolution of the pdf of the two (normal and chi--squared distributed)
constituting random variables, \citep[for properties, see][]{bonamente2024properties}. In the asymptotic limit of an extensive dataset in the
large--mean regime, both $X$ and $Y$ are normally distributed, resulting in
a normal distribution for $Z$, namely
\begin{equation}
Z \overset{a}{\sim} N(\nu+\hat{\mu}_C, 2 \nu+ \hat{\sigma}^2_C).    
\label{eq:ZAsymptotic}
\end{equation}
An underlying assumption is that $X$ and $Y$ are independent random variables, as was discussed in \paperOne.

\subsection{The $\Delta C$ statistic with systematic errors}

With the model of systematic errors described in \paperOne, the $\Delta C$ statistic in the presence of systematic errors
is modified to 
\begin{equation}
    \DeltaCSysEq \coloneqq (X_r-X) + (Y_r-Y)
    \label{eq:DeltaCsys}
\end{equation}
where $\Delta X \coloneqq(X_r-X) \sim \chi^2(k)$ represents the usual statistic according to \eqref{eq:DeltaC} (i.e., without the use of systematic errors) and the
additional term $\Delta Y \coloneqq (Y_r-Y)$ represents the additional contribution 
introduced by the systematic errors.  
The goal of this paper is to determine the distribution of this newly defined \DeltaCSys\ statistic.

\section{Distribution of the $\Delta Y$ statistic}
\label{sec:DeltaY}
This section studies the distribution of $\Delta Y$ and its relationship to $\Delta X$,
in order to determine the distribution of \DeltaCSys\ in the presence of systematic errors according to \eqref{eq:DeltaCsys}, which will then be studied in Sec.~\ref{sec:DeltaC}.

\subsection{General considerations}
\label{sec:DeltaYGeneral}

The effect of the random variable $M_i$ according to \eqref{eq:sigmaInt} on the $Y$ and $Y_r$ statistics is that of
a 
\emph{randomization} of the best--fit model according to \eqref{eq:sigmaInt},
which was also used in \paperOne. 
In the following, we use the notation $N(\mu, \sigma^2)$ to describe the distribution of the $M_i$ random variables,
with mean $\mu$ and variance $\sigma^2$ according to \eqref{eq:sigmaInt}, implying that a normal distribution is a suitable choice to model systematic errors. In \paperOne\ we described the effect of choice of distribution for $M_i$ (e.g., Gaussian, gamma or other), and concluded that the choice is not critical. Those conclusions also apply to the results to be presented in this paper,
and the effects of distributional choices for $M_i$ will be discussed where relevant throughout the paper. 

This means that the randomized
statistics are evaluated using, respectively,  
\begin{equation}
    \begin{cases}
        \mu_i \sim N(\hat{\mu}_i,\sigma^2_{\text{int},i})   \text{ with } \sigmaIntiEq= f \cdot \hat{\mu}_i \text{ (for } C_{\mathrm{min}})\\
        \mu_{r,i} \sim N(\hat{\mu}_{r,i},\sigma^2_{\text{int,r},i}) \text{ with }
  \sigma_{\text{int,r},i}= f \cdot \hat{\mu}_{r,i} \text{ (for } C_{\mathrm{min,r}}).
    \end{cases}
\end{equation}
 The constant $f$ represents the 
  relative value of the intrinsic model error, i.e.,
  \begin{equation} 
  f\coloneqq \dfrac{\sigma_{\text{int},i}}{\hat{\mu}_i} = \dfrac{\sigma_{\text{int,r},i}}{\hat{\mu}_{r,i}}
  \label{eq:f}
  \end{equation}
e.g., $f=0.1$ for a 10\% level of systematic errors. This value is assumed
constant for all data points, and it is expected that $f \ll 1$.

Accordingly, we set
\begin{equation}
\begin{cases}
    \mu_i = \hat{\mu}_i + x_i\, \text{ with } x_i \sim N(0, (f \cdot \hat{\mu}_i)^2)\\
    \mu_{r,i} = \hat{\mu}_{r,i} + x_{r,i} \, \text{ with } x_{r,i} \sim N(0, (f \cdot \hat{\mu}_{r,i})^2).
\end{cases}
    \label{eq:mui}
\end{equation}

Each pair of random variables $x_{r,i}$ and $x_i$ corresponding to the randomization
of the best--fit models in the $i$--th bin, however, are \emph{not} independent of one another by design. 
In fact,
they must be modelled as having \emph{perfect} correlation, in that they represent the occurrence that
a systematic error  causes a given datum to be shifted by a given amount, and therefore
both randomized models will follow the same random shift. This
perfect correlation between $x_{r,i}$ and $x_i$ results in the following expectations for this difference:
\begin{equation}
    \begin{cases}
        \E(x_{r,i}-x_i) = 0\\
        \Var(x_{r,i}-x_i)= f^2 \,(\hat{\mu}_{r,i}-\hat{\mu}_i)^2 ,
    \end{cases}
\end{equation}
with the consequence that
a data point where the full and reduced model are identical will feature a null contribution from the
randomization of the models.~\footnote{This is in contrast with the case where $x_{r,i}$ and $x_i$ are
independent, in which case $\Var(x_{r,i}-x_i)=f^2 \,(\hat{\mu}_{r,i}^2+\hat{\mu}_i^2)$ would apply. This
is not the case for this model of systematic errors.} 
Also,  it is immediate to show that
\begin{equation}
    \Delta Y \simeq 2\, \sum_{i=1}^N (x_{r,i}-x_i),
    \label{eq:DeltaY}
\end{equation}
the approximation holding when $f \ll 1$ (see Appendix~\ref{app:randomization} for details).  This property applies
to any parameterization of the models.

The correlation between each $x_{r,i}$ and $x_i$ pair has therefore an effect on the distribution
of $\Delta Y$ according to \eqref{eq:DeltaY}. In general, the distribution of $\Delta Y$ may be model--dependent, in that there is also a correlation between $\hat{\mu}_{r,i}$ and $\hat{\mu}_i$ in a given bin, and between the $(x_{r,i}-x_i)$ 
terms in different bins.  
Specifically, the perfect correlation between $x_{r,i}$ and $x_i$ leads to, 
in the case of a normal distribution for $M_i$,
\begin{equation}
    \Delta x_i \coloneqq x_{r,i}-x_i \sim N(0, f^2 \,(\hat{\mu}_{r,i}-\hat{\mu}_i)^2).
\label{eq:Deltaxi}
\end{equation}
In general, the variance of $\Delta x_i$ is approximately the same as in \eqref{eq:Deltaxi}, but when the 
distribution of $M_i$ is non--normal (e.g., a gamma distribution), different considerations 
must be used to obtain the distribution of $\Delta x_i$. Eq.~\ref{eq:Deltaxi} can therefore be used exactly 
when the $M_i$ are normal, or as an approximation in all other cases.

The following section considers a baseline constant model and a simple one--parameter extension, 
for which it is possible to find an analytic form for the  $\Delta Y$ statistic under the null hypothesis, and therefore
for the $\Delta C$ statistic in the presence of systematic errors. 
In general, the distribution of $\hat{\mu}_{r,i}-\hat{\mu}_i$, and therefore of $\Delta Y$, 
may  depend on model parameterization, and therefore additional considerations are required.
More general results
for the distribution of $\Delta Y$ that apply to any one--parameter extensions to a constant
reduced model are then presented in Sec.~\ref{sec:DeltaYq1}, and for the more general
case with $k \geq 1$ in Sec.~\ref{sec:extension}. A mathematical conjecture that justifies
these generalizations is discussed in 
App.~\ref{app:conjecture}.

\subsection{Constant model with a one--bin step--function}
\label{sec:DeltaYStep}
Consider, as an initial example, that the reduced model is a constant model with parent Poisson mean $\mu$ for 
all bins and thus with $m-k=1$; the full model is a constant model where
a fixed $j$--th bin is free to assume any value, therefore with $m=2$ free parameters (the overall
constant level and the level at the fixed position of the $j$--th bin) and $k=1$. This
can be considered as a toy model 
for the detection of unresolved features or fluctuations, and an approximation of models used in applications
\citep[i.e., the \texttt{line} model in \texttt{SPEX},][]{spence2023, bonamente2023}.

For this model, the $j$--th bin is the only bin where significant difference between the full and the
reduced model are expected, under common experimental conditions of extensive data ($N\gg 1$). 
In order to establish the asymptotic distribution of $\Delta Y$ for large $N$, it is 
therefore necessary to study the distribution of $(x_{r,j}-x_j)$ for the $j$--th bin where the additional
nested component is located.
Since the $N$ independent data points are distributed as $y_i \sim \mathrm{Poiss} (\mu)$, 
the fitted reduced constant model is the sample mean, and thus
$\hat{\mu}_{r,j} \sim N(\mu, \mu/N)$. On the other hand, the estimated mean for the full model at the
$j$--th position is 
$\hat{\mu}_j \sim \mathrm{Poiss} (\mu)$, since the full model has the flexibility to follow the $j$--th datum exactly due to the chosen form for the full model. In the large--mean limit, i.e., approximating the relevant
Poisson distributions with normal distributions of same mean and variance, 
it is thus approximately true that
\begin{equation}
    \hat{\mu}_{r,j}-\hat{\mu}_j \sim N(0,\mu(1+1/N) \simeq N(0,\mu).
\end{equation}

It therefore follows that the $\Delta Y$ distribution is a compounded normal distribution,
\begin{equation}
\begin{cases}
    \Delta Y \sim N(0, a^2\, \xi^2),\, \mathrm{with}\\
    \xi \coloneqq ( \hat{\mu}_{r,j}-\hat{\mu}_j) \sim  N(0,\mu) \text{ (for a fixed $j$)}
    \end{cases}
    \label{eq:compoundGauss}
\end{equation}
where the variance of $\xi$ is $\mu$, which is the fixed mean of the parent constant Poisson process, 
$f$ is the fixed value of the relative
systematic error, and $a\coloneqq2f$. 
Eq.~\ref{eq:compoundGauss} is the main result for the distribution
of the $\Delta Y$ statistic for this simple model with a one--bin step--function in a given $j$--th bin, and it
applies only in the large--mean limit. For data in the low--count regime, one cannot approximate a Poisson with 
a normal distribution, and therefore additional arguments would need to be used to find the distribution of $\Delta Y$ in this regime.

The compounded distribution of $\Delta Y$ according to \eqref{eq:compoundGauss} is said to be
a \emph{Bessel distribution} (of order zero),
\begin{equation}
\Delta Y \sim  K_0(\alpha)
\label{eq:Bessel}
\end{equation}
with scale parameter $\alpha= 2 f \sqrt{\mu}$ \citep[e.g.][]{mckay1932,craig1936,kotz2001}, where $K_0$
is the usual Bessel function of order 0.
The mean of
this distribution is zero, and 
the variance is
\begin{equation}
    \Var(\Delta Y) = 4\, f^2\, \mu,
\end{equation}
which is in fact consistent with the approximation of $\Var(\Delta Y) \simeq 4\, f^2\, y_j$ that
was previously provided in \cite{bonamente2023} using a simplified model of systematic errors. 
A feature of this distribution is a cusp in the pdf at $y=0$ of the Bessel function,
which is immediately seen as the
result of the normal distribution for the variance of $\Delta Y$.
Mathematical properties of this  distribution are described in more detail in \cite{bonamente2025properties}.

\subsection{Distribution of $\Delta Y$ for one additional parameter}
\label{sec:DeltaYq1}
The considerations and results provided above in Sect.~\ref{sec:DeltaYStep} can be generalized to 
other model parameterizations beyond the simple one--bin step function presented in the previous section.  
As an initial generalization, we 
continue with a reduced constant parent model,
and of a full model with just one additional nested parameter, such as the linear model. For this purpose, the following lemma and the associated theorem are
presented. 

\begin{lemma}[Distribution of sum of model deviations for a constant parent model and $k=1$]
\label{lemma1}
    Under the null hypothesis that the data are drawn from a (reduced) constant model
    with parent mean $\mu$, and that the
    full model has $k=1$ additional nested parameter, it is asymptotically true that
    \begin{equation}
        \sum_{i=1}^N (\hat{\mu}_{r,i}-\hat{\mu}_i) \sim N(0, \mu).
    \end{equation}
\end{lemma}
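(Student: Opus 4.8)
The plan is to show that a single adjustable parameter injects exactly one Gaussian degree of freedom of variance $\mu$ into the model deviations, thereby generalizing the single free bin of Sec.~\ref{sec:DeltaYStep}. First I would use that a constant reduced model is fit by the sample mean, so $\hat{\mu}_{r,i}=\bar{y}$ in every bin and $\sum_i\hat{\mu}_{r,i}=\sum_i y_i$; the aggregate deviation is then the total residual $\sum_i(\hat{\mu}_{r,i}-\hat{\mu}_i)=\sum_i(y_i-\hat{\mu}_i)$ of the full fit. Writing the full model additively as $\mu_i=a+b\,h(x_i)$, with the constant recovered at the null value $b=0$ and true level $a=\mu$, I would introduce the independent Poisson fluctuations $r_i\coloneqq y_i-\mu$, for which $\E(r_i)=0$ and $\Var(r_i)=\mu$ under the null.

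Next I would linearize the maximum--likelihood score equations about $(a,b)=(\mu,0)$. The key structural fact is that the constant direction lies in the tangent space of the full model, so that the score equation for the level $a$ forces the $\hat{\mu}_i$--weighted residual sum $\sum_i(y_i-\hat{\mu}_i)/\hat{\mu}_i$ to vanish; at leading order this removes the naive contribution to the summed deviation and leaves it governed by the fitted amplitude of the single added direction. Solving the linearized normal equations yields $\hat{b}\simeq\sum_i r_i(h_i-\bar{h})\big/\sum_i(h_i-\bar{h})^2$ and the per--bin deviation $\hat{\mu}_{r,i}-\hat{\mu}_i\simeq-\hat{b}\,(h_i-\bar{h})$, with $\bar{h}\coloneqq N^{-1}\sum_i h(x_i)$. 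This reduces the whole problem to a single scalar, the component $A\coloneqq\sum_i r_i(h_i-\bar{h})\big/\sqrt{\sum_i(h_i-\bar{h})^2}$ of the fluctuations along the direction that the extra parameter adds, and the step function of Sec.~\ref{sec:DeltaYStep} is recovered as the special case $h=e_j$.

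For the distributional step, $A$ is a fixed linear combination of the independent residuals $r_i$, so the Lindeberg central limit theorem gives $A\overset{a}{\sim}N(0,\sigma^2)$ in the extensive regime; since the $r_i$ are independent with common variance $\mu$, the normalization by $\sqrt{\sum_i(h_i-\bar{h})^2}$ makes $\sigma^2=\mu$, independent of the basis $h$. The model deviations therefore collapse to a single Gaussian degree of freedom carrying the parent Poisson variance $\mu$, reproducing the $N(0,\mu)$ law asserted in the lemma and generalizing the $\xi\sim N(0,\mu)$ result obtained for the one--bin step function.

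I expect the main obstacle to be the careful treatment of the refitted overall normalization: one must show that the level parameter's score equation removes the naive leading contribution to the summed deviation, so that the distribution is controlled by the single added direction, and then verify that the resulting variance is exactly $\mu$ and independent of the model parameterization. A secondary point is to justify the large--mean normal approximation to the Poisson fluctuations uniformly enough to invoke the central limit theorem, and to confirm that the subleading terms dropped in linearizing the score equations do not affect the variance at leading order.
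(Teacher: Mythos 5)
Your route is genuinely different from the paper's: the paper proves Lemma~\ref{lemma1} by approximating $\Delta X \simeq \frac{1}{\mu}\sum_i \Delta\hat{\mu}_i^2$, identifying this with $\frac{1}{\mu}\left(\sum_i \Delta\hat{\mu}_i\right)^2$ on the grounds that the cross--product terms are negligible, invoking the Wilks theorem to get the $\chi^2(1)$ law, and then taking a symmetric square root to conclude $\sum_i \Delta\hat{\mu}_i \sim N(0,\mu)$. You instead solve the linearized Poisson score equations explicitly and apply the central limit theorem to the projection coefficient $A$, which derives the Gaussian law constructively rather than by inverting Wilks. Your linearization itself is correct: $\hat{\mu}_{r,i}=\bar{y}$, $\hat{b}\simeq\sum_i r_i(h_i-\bar{h})/\sum_i(h_i-\bar{h})^2$, $\hat{\mu}_{r,i}-\hat{\mu}_i\simeq -\hat{b}(h_i-\bar{h})$, and $A\sim N(0,\mu)$ all hold in the large--mean, extensive regime.

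The gap is in your final identification. By your own formulas,
\begin{equation*}
\sum_{i=1}^N (\hat{\mu}_{r,i}-\hat{\mu}_i) \simeq -\hat{b}\sum_{i=1}^N (h_i-\bar{h}) = 0 ,
\end{equation*}
identically, because $\bar{h}$ is defined as the average of the $h_i$; the level score equation does not merely ``remove the naive leading contribution,'' it annihilates the summed deviation entirely at the order of your linearization. (The same degeneracy is exact for the one--bin step function of Sec.~\ref{sec:DeltaYStep}: there $\hat{\mu}_j=y_j$, $\hat{\mu}_i=\frac{1}{N-1}\sum_{l\neq j}y_l$ for $i\neq j$, $\hat{\mu}_{r,i}=\bar{y}$, and the deviations sum to exactly zero.) The scalar you prove to be $N(0,\mu)$ is therefore not the summed deviation but the signed root of the sum of \emph{squares}: $A^2=\hat{b}^2\sum_i(h_i-\bar{h})^2=\sum_i \Delta\hat{\mu}_i^2$. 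Your concluding sentence silently swaps one for the other, which is precisely the delicate step; the paper's proof bridges it by asserting $\sum_i\Delta\hat{\mu}_i^2\simeq\left(\sum_i\Delta\hat{\mu}_i\right)^2$, whereas your explicit solution shows the two sides behave very differently (one is $\mu\,\chi^2(1)$, the other vanishes). To close your argument you would need either to justify replacing the sum of deviations by the signed magnitude $A$ of the deviation vector --- which is in fact the quantity that Theorem~\ref{th:DeltaY} actually needs for the compounding variance of $\Delta Y$ --- or to reformulate the statement you are proving in terms of $\sqrt{\sum_i \Delta\hat{\mu}_i^2}$. As written, the obstacle you flagged at the end is not resolved; your own construction shows it cannot be resolved in the literal form you (and the lemma) state it.
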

    \begin{proof}
        When the parent mean 
        is $\mu \gg 1$, it is possible to approximate
        \begin{equation}
            \Delta X \simeq \sum_{i=1}^N \dfrac{ \Delta \hat{\mu}_i^2}{\hat{\mu}_{r,i}}
            \label{eq:DeltaXApprox}
        \end{equation}
            where $\Delta \hat{\mu}_i = \hat{\mu}_{r,i}-\hat{\mu}_i$, and to approximate $y_i \simeq \hat{\mu}_{r,i}$,
            in accordance with the null hypothesis. When the model is constant, the approximation
            leads to 
        \begin{equation}
              \Delta X \simeq \dfrac{1}{\mu} \sum_{i=1}^N \Delta \hat{\mu}_i^2  
              \simeq  \dfrac{1}{\mu} \left(  \sum_{i=1}^N \Delta \hat{\mu}_i  \right)^2 \sim \chi^2(1),        
              \label{eq:DeltaXLemma1}
        \end{equation}
where the second approximation is due to the fact that,
to zeroth order,  $\sum \Delta \hat{\mu}_i \simeq 0$  according to
    the null hypothesis, and therefore the cross--product terms are negligible compared to the 
    $\Delta \hat{\mu}_i^2$ term.

        The distribution $\Delta X \sim \chi^2(1)$ applies under the null hypothesis, 
        providing the means to obtain a distribution 
        for its square root using the known fact that the square of a standard normal variable has a $\chi^2(1)$ distribution.
        While in general the converse is not necessarily true \citep[e.g.][]{roberts1966,roberts1971}, the square root of a $\chi^2(1)$ variable {is} in fact
        distributed as a standard normal under the assumption that the variable is symmetric \citep[e.g.][]{block1975}.
        This implies that, 
        within the limits of these assumptions and approximations,
        \begin{equation}
            \sum_{i=1}^N \Delta \hat{\mu}_i \sim N(0,\mu),
        \end{equation}
    where the variance of the normal distribution is $\mu$, according to \eqref{eq:DeltaXLemma1}.        

    \end{proof}


It is necessary to point out that the factorization of the parent mean \eqref{eq:DeltaXLemma1} is
required to
relate the $\Delta C$ statistic to the $\chi^2(1)$ distribution, and thus prove the lemma. Such factorization is only
possible when the parent model is constant. Therefore
Lemma~\ref{lemma1} is not guaranteed to apply in general, and additional considerations are required to establish an equivalent result when the reduced model is not constant. 

\begin{theorem}[Distribution of $\Delta Y$ for a constant parent model and $k=1$]
\label{th:DeltaY}
    Under the null hypothesis that the data are drawn from a reduced constant model
    with parent mean $\mu$, and that the
    full model has $k=1$ additional nested parameter, it is asymptotically true that
    \begin{equation}
        \Delta Y = 2 \sum_{i=1}^N (x_{r,i}-x_i) \sim K_0( 2\, f \sqrt{\mu}),
    \end{equation}
    i.e., $\Delta Y$ is distributed as a Bessel distribution with parameter 
    $\alpha= 2 \, f \sqrt{\mu}$, where $f$ is the value
    of the relative systematic errors as defined in \eqref{eq:f}.
\end{theorem}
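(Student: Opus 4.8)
The plan is to reduce the general one--parameter case to the single--bin step--function computation already performed in Sec.~\ref{sec:DeltaYStep}, with Lemma~\ref{lemma1} supplying the distribution of the aggregate model deviation that plays the role of $\xi$ in \eqref{eq:compoundGauss}. The whole argument is a conditioning argument: first condition on the data (equivalently, on the fitted deviations $\Delta\hat\mu_i \coloneqq \hat\mu_{r,i}-\hat\mu_i$), average over the randomization to obtain a conditional Gaussian law for $\Delta Y$, and then average over the distribution of the deviations provided by Lemma~\ref{lemma1}.

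First I would start from \eqref{eq:DeltaY}, namely $\Delta Y \simeq 2\sum_{i=1}^N (x_{r,i}-x_i)$ (holding to leading order for $f\ll 1$), and write each term using the perfect--correlation model: $x_{r,i}-x_i = f\,z_i\,\Delta\hat\mu_i$, where $z_i$ is the standard--normal randomization driving the $i$--th bin and is independent of the statistical fluctuations that fix $\Delta\hat\mu_i$. Conditioning on the $\Delta\hat\mu_i$, the statistic $\Delta Y$ is then a sum of independent zero--mean Gaussians, hence itself Gaussian with zero mean and conditional variance
\begin{equation}
\Var\left(\Delta Y \mid \{\Delta\hat\mu_i\}\right) = 4 f^2 \sum_{i=1}^N \Delta\hat\mu_i^2 .
\end{equation}
I would then invoke the same factorization used in the proof of Lemma~\ref{lemma1}: under the constant null model $\sum_i \Delta\hat\mu_i \simeq 0$ to zeroth order, so the cross terms in $\left(\sum_i \Delta\hat\mu_i\right)^2$ are negligible and $\sum_i \Delta\hat\mu_i^2 \simeq \left(\sum_i \Delta\hat\mu_i\right)^2$, as in \eqref{eq:DeltaXLemma1}. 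Setting $\xi \coloneqq \sum_i \Delta\hat\mu_i$ and $a\coloneqq 2f$, the conditional law becomes $\Delta Y \mid \xi \sim N(0, a^2\xi^2)$, while Lemma~\ref{lemma1} gives $\xi \sim N(0,\mu)$ asymptotically. This is precisely the compound--Gaussian structure of \eqref{eq:compoundGauss}, so the unconditional law of $\Delta Y$ is the order--zero Bessel distribution $K_0$ with scale parameter $\alpha = a\sqrt{\mu} = 2f\sqrt{\mu}$, as already established in Sec.~\ref{sec:DeltaYStep}, which completes the proof.

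The hard part will be justifying the reduction $\sum_i \Delta\hat\mu_i^2 \simeq \left(\sum_i \Delta\hat\mu_i\right)^2$, i.e.\ arguing that the ensemble of bin--by--bin deviations behaves as a single effective degree of freedom. This is exactly the step where the constant--parent assumption is indispensable---it is what lets the parent mean $\mu$ be factored out in \eqref{eq:DeltaXLemma1}---and where the identity $\Delta X\sim\chi^2(1)$ does the real work; I would lean on the cross--term cancellation argument already given for Lemma~\ref{lemma1} rather than re--derive it. A secondary point to handle carefully is the independence of the randomization variables $z_i$ across bins, which is what makes the conditional variance proportional to $\sum_i \Delta\hat\mu_i^2$ rather than to a full double sum; any residual inter--bin correlation among the $(x_{r,i}-x_i)$ terms, alluded to in Sec.~\ref{sec:DeltaYGeneral}, would then have to be argued negligible in the extensive, large--mean regime.
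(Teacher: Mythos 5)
Your proposal is correct and follows essentially the same route as the paper's own proof: write $x_{r,i}-x_i$ via the perfectly correlated randomization, condition on the fitted deviations to get $\Delta Y \mid \{\Delta\hat\mu_i\} \sim N\bigl(0,\,4f^2\sum_i \Delta\hat\mu_i^2\bigr)$, invoke the cross--term cancellation $\sum_i \Delta\hat\mu_i^2 \simeq \bigl(\sum_i \Delta\hat\mu_i\bigr)^2$ from Lemma~\ref{lemma1}, and compound with $\xi=\sum_i\Delta\hat\mu_i \sim N(0,\mu)$ to obtain the Bessel law $K_0(2f\sqrt{\mu})$. If anything, you make explicit a step the paper's proof passes over silently (the replacement of the sum of squares by the square of the sum inside the conditional variance), and you correctly identify it as the point where the constant--parent assumption is essential.
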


\def\DeltamuiHat{\Delta \hat{\mu}_i}
\begin{proof}
    Following the same arguments as in Sect.~\ref{sec:DeltaYStep}, and
    with $\DeltamuiHat=\hat{\mu}_{r,i}-\hat{\mu}_i$ as previously defined,
    \[
    x_{r,i}-x_i \sim N(0, \DeltamuiHat^2\, f^2)
    \]
    due to the usual correlation between the two randomized values. Assuming independence in the
    bin--by--bin randomization, it is then true that
    \[
    \sum_{i=1}^N (x_{r,i}-x_i) \sim N\left(0, \sum_{i=1}^N \DeltamuiHat^2\, f^2\right).
    \]
    The
    $\Delta Y$ variable is compounded according to an equivalent relationship to \eqref{eq:compoundGauss},
    \begin{equation}
\begin{cases}
    \Delta Y \sim N(0, a^2\, \xi^2),\, \mathrm{with}\\
    \xi= \sum_{i=1}^N \DeltamuiHat \sim N(0,\mu),
    \end{cases}
    \label{eq:compoundGaussLinear}
\end{equation}
with the usual $a=2\, f$, and with  Lemma~\ref{lemma1} being used for the distribution of the sum of model
differences. The rest of the theorem is 
due to the definition of a Bessel distribution according to \eqref{eq:compoundGauss}.
\end{proof}

Theorem~\ref{th:DeltaY} therefore establishes that, in the limit of a large parent Poisson mean
and under the null hypothesis of a constant reduced model,
the $\Delta Y$ statistic has a Bessel distribution.
This result is expected to hold for any model with one
additional parameter (e.g., the linear model) relative to the baseline  constant model.


\subsection{Extension to other parameterizations and multiple parameters in the nested component}
\label{sec:extension}
It is useful to seek an extension of the results of Sec.~\ref{sec:DeltaYq1} to any reduced model beyond the simple
constant model,
and for any number of parameters for the nested model component. This section discusses this general situation, which is necessary in order to use systematic errors for the \DeltaC\ statistics
in practical applications that typically feature more complex models, and often $k>1$ parameters
in the nested component.
 In the following we present practical considerations for the
use of this model of systematic errors in the general case, and in 
App.~\ref{app:conjecture} we
 outline a path towards a mathematical proof of these considerations.

Starting with the results of a Bessel--distributed $\Delta Y \sim K_0(\alpha)$ with $\alpha=2\,f \sqrt{\mu}$ from
Th.~\ref{th:DeltaY}, which applies exactly only for a constant model and one additional parameter ($k=1$), 
it appears reasonable to entertain a generalization that features 
\begin{equation}
    \alpha_k=2\,f \sqrt{k\cdot \overline{\mu}}
    \label{eq:alphaPractical}
\end{equation}
when $k\geq 1$,  where  $\overline{\mu}$
represents a suitable average of the parent Poisson mean $\mu$ over the range of the
independent variable $x$.
In fact, the variance of the Bessel distribution is $\alpha^2$ \citep[e.g.][]{kotz2001,bonamente2025properties}, and such generalization would be the result of the linear addition of $k$ independent 
contributions. This is the practical sense behind Conjecture~\ref{conjectureDeltaX} and its associated 
corollaries that are proposed in a more formal way in App.~\ref{app:conjecture}. Therefore it is reasonable to expect that
\begin{equation}
    \Delta Y \sim K_0(\alpha_k),
    \label{eq:DeltaYK0}
\end{equation}
which is the same distribution as in Corollary~\ref{th:corollary2} discussed
in App.~\ref{app:conjecture}.

Alternatively, we entertain the possibility that the distribution of the 
$\Delta Y$ statistic is the sum of $k$ Bessel distributions, each of the same type discussed in the previous 
paragraph for the case of $k=1$. In this case, and under the additional assumption of independence among these 
(random variable) contributions, the $\Delta Y$ statistic will have the same zero mean
and variance as given by \eqref{eq:alphaPractical} or \eqref{eq:alphaq}, 
but be asymptotically
\emph{normally} distributed according to the central limit theorem when $k$ becomes large. 
In this case, it would be asymptotically true that
\begin{equation}
    \Delta Y \overset{a}{\sim} N(0,\alpha_k)
    \label{eq:DeltaYPractical}
\end{equation} 
when there are several
adjustable parameters ($k\gg 1$) in the nested component. This possibility is in fact consistent with 
an earlier 
model of systematic errors for $\Delta C$  \citep[e.g., Eq.~26 of][]{bonamente2023}.

The two distributions for $\Delta Y$ proposed in this section will be tested with the aid of numerical
simulations in the following section.  The reader is referred to App.~\ref{app:conjecture} for a more
formal treatment that leads to \eqref{eq:DeltaYK0}.

\subsection{Numerical tests of the $\Delta Y$ distribution}

We performed a series of numerical simulations to test the distributions of $\Delta Y$ that have
been 
proposed in this section. The numerical simulations follow the same
method as those presented in \paperOne, with $N=100$ bins and a constant model used as the
baseline or reduced model, and a Monte Carlo simulation with 1,000 iterations.

\begin{figure*}
    \centering
    \includegraphics[width=3.4in]{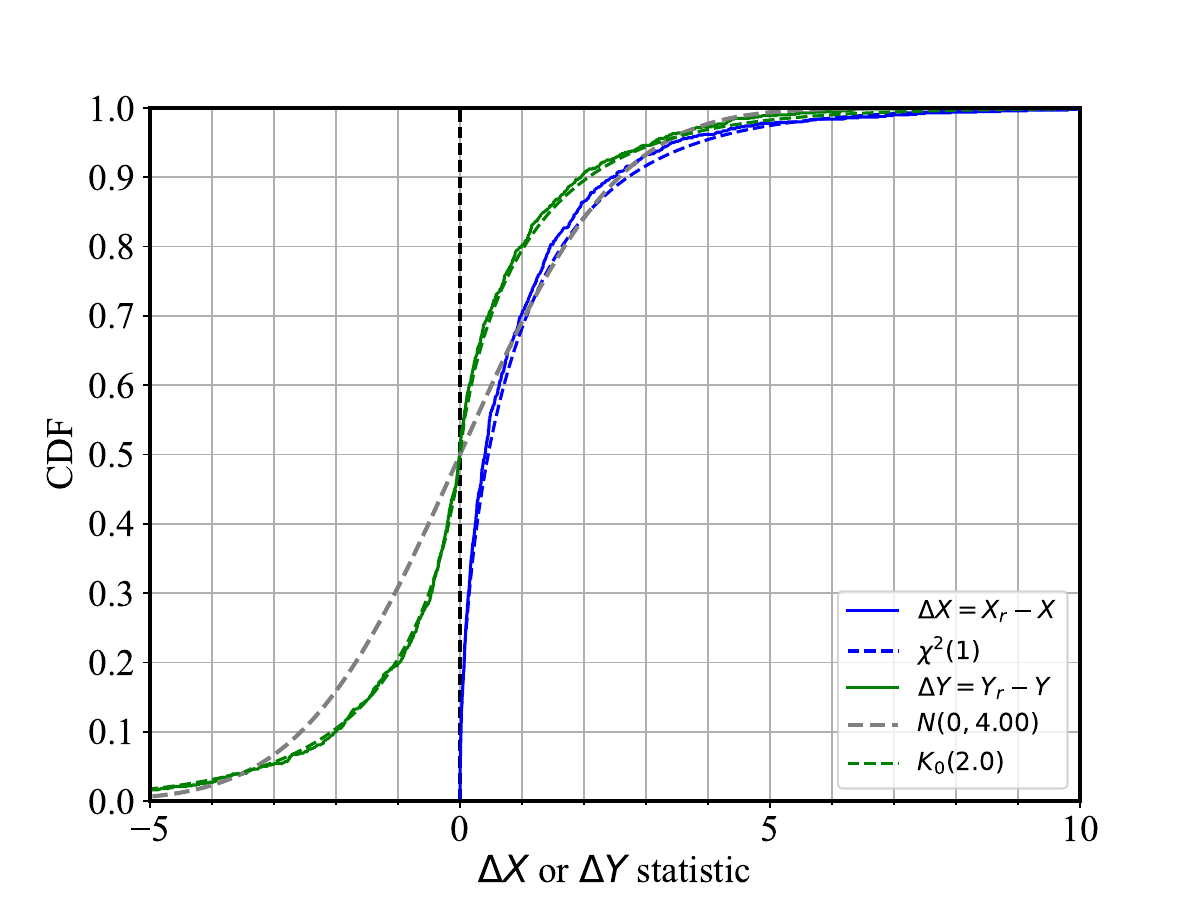}
    \includegraphics[width=3.5in]{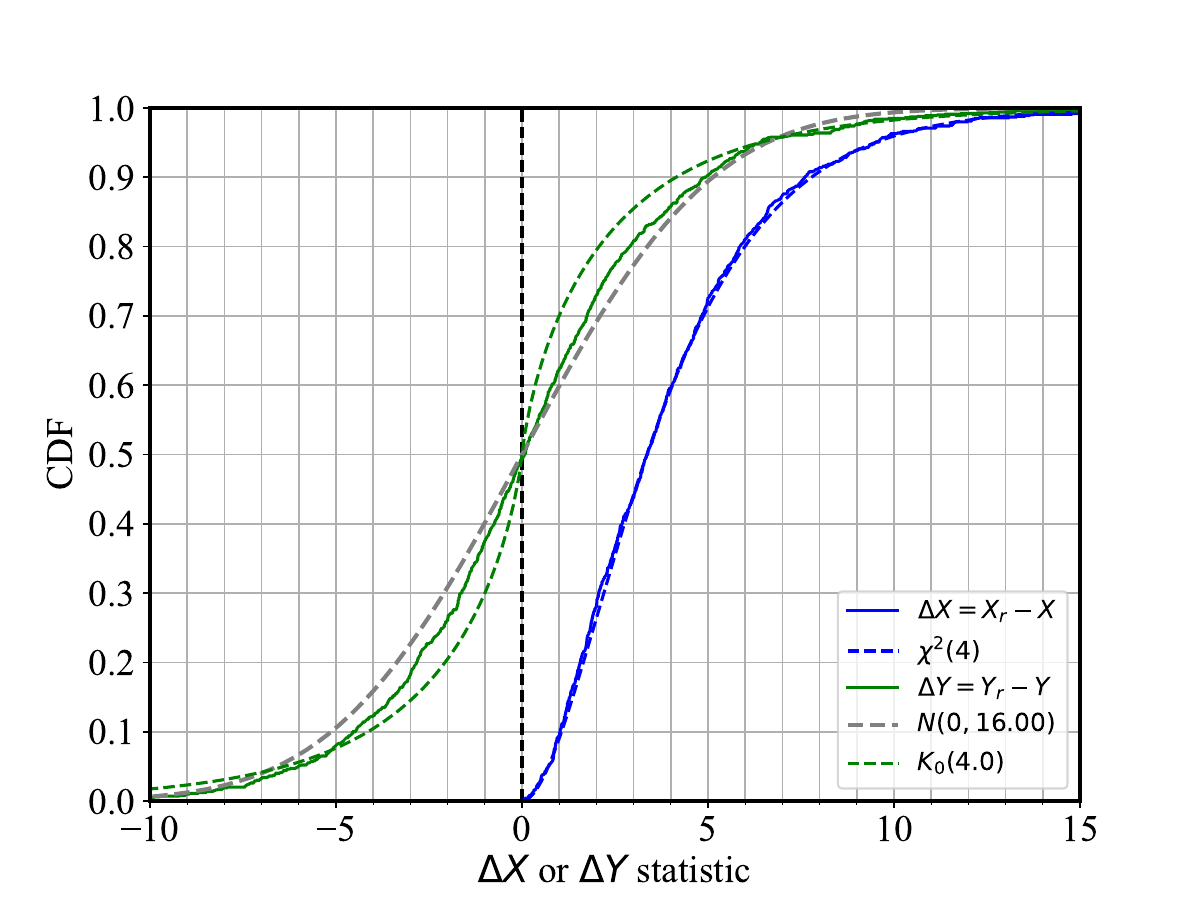}
    \caption{(Left:) Cumulative distribution functions for the $\Delta X$ and $\Delta Y$ statistics that apply to the Poisson regression with the one-bin step-function, for a 10\% systematic error. 
    (Right:) Same, but for the $k$--bin step--function for $k=4$. Overplotted are the
    chi--squared distributions for the $\Delta X$ statistics, and the Bessel and normal approximations
    for the $\Delta Y$ statistics according to \eqref{eq:DeltaYK0} and \eqref{eq:DeltaYPractical}.}
    \label{fig:DeltaxDeltaY}
\end{figure*}

\subsubsection{Tests for $k=1$ (one--parameter nested component)}
\label{sec:DeltaXDeltaY}
First, the full model consists of the one--bin step--function modification to the baseline constant model,
as described in Sect.~\ref{sec:DeltaYStep}, with a 10\% systematic error ($f=0.1$). 
At each iteration, a Poisson dataset is drawn from the parent distribution with a constant mean $\mu$, and the
data are fit to both the full and the reduced model. 
The empirical CDF for the relevant \cmin\ statistics were illustrated and discussed
in \paperOne, with the general result that the $\cmineq=X+Y$ statistics are accurately described by the 
normal distribution \eqref{eq:ZAsymptotic} that is applicable in the large--mean and extensive data regime of this simulations. These distributions are not shown in this paper; instead, we focus
on the $\Delta X$ and $\Delta Y$ statistics as defined in \eqref{eq:DeltaCsys}.

After the best--fit models $\hat{\mu}_i$ and $\hat{\mu}_{r,i}$ are obtained, they are randomized following the method described in 
Sec.~\ref{sec:DeltaYGeneral}, using $f=0.1$ or a 10\% level of systematic errors. 
The left panel of Figure~\ref{fig:DeltaxDeltaY} shows the resulting eCDFs for the statistic.  In blue is the $\Delta X$ statistic, which as expected follows closely a $\chi^2(1)$ distribution.
The $\Delta Y$ distribution is illustrated as a green solid curve, and it follows closely
the expected Bessel distribution (dashed green curve) with an expected variance of $\Var(\Delta Y)=4$, for a choice of $f=0.1$ and $\mu=100$. This simulation therefore illustrates that the compounded distribution for the $x_{r,i}-x_i$ variable described in Sect.~\ref{sec:DeltaYGeneral}  can be successfully used to model the $\Delta Y$ contribution to the $\Delta C$ statistic induced by systematic errors, for the simple one--bin step--function modification of the constant model. For comparison, a normal distribution of same variance 
(and same null mean) is overplotted as a red dashed curve, to show that it is a poorer match to the simulated eCDF of the statistic.

Next, the same simulation is repeated by using the linear model 
as the full model, in place of the one--bin step--function modification to the baseline constant model, as the
full model. The results of this Monte Carlo simulation 
are virtually identical to those of Fig.~\ref{fig:DeltaxDeltaY} and therefore they are not shown, with
the $\Delta Y$ variable closely following the appropriate Bessel distribution.
This agreement follows the results of Sec.~\ref{sec:DeltaYq1}, where the Bessel $B(\sigma)$ distribution
was shown to apply to the $\Delta Y$ statistic for any one--parameter nested component added to a constant
model.

\subsubsection{Tests for $k>1$ (multi--parameter nested component)}
For this purpose, we first consider a model similar to the one--bin step function considered above,
except that the model now has $k$ step functions instead of one, for a total of $k$ model parameters in the nested component 
corresponding to the normalizations in those indepedent bins.~\footnote{The location
of these bins in the $x$ variable is irrelevant, given the independence of the Poisson data.}
According to \eqref{eq:DeltaYK0}, the distribution of $\Delta Y$ is expected to follow a Bessel distribution with parameter $\alpha_k=2 f \sqrt{k \mu}$.
The simulations performed with the $k$--bin step--function model show that this model
is accurate when $k$ is a small number (e.g., $k=2$ or 3), and $\Delta Y$ progressively
tends to a normal distribution of same variance according to \eqref{eq:DeltaYPractical} 
when $k$ becomes larger. A representative situation is illustrated in the right--hand panel of
Fig.~\ref{fig:DeltaxDeltaY} for the case of $k=4$, where the simulated eCDF is between expected the 
Bessel and normal distributions, consistent with the discussion of Sec.~\ref{sec:extension}.
 For the Monte Carlo simulation of Fig.~\ref{fig:DeltaxDeltaY}, 
the same $\mu=100$, $f=0.1$ and $N=100$ parameters as in Sec.~\ref{sec:DeltaYStep} were used. 
Similar results apply to other choices of the parameters $\mu$ and $f$, and they are not reported in the paper.

Next, we consider a polynomial model of order $k$, e.g., $y=a_o+a_1\,x+ \dots +a_k\,x^k$, which generalizes the constant model with a nested model component with $k$ parameters $a_k$. In particular, we performed a series of numerical simulations for $k=1$ and $k=2$, which 
yield quantitatively similar results to those illustrated in Figs.~\ref{fig:DeltaxDeltaY}: the case of $k=1$
follows the expected theoretical behavior of a Bessel function, and the case of $k=2$ shows only small
deviations towards normality, in both cases with the expected mean and variance. Those
tests lend additional support for the results presented in this paper. 

We also notice that, for larger
values of $k$, the polynomial model appears to suffer from the problem of parameter \emph{unidentifiability}, 
which is manifested
as the $\Delta X$ distribution itself not following the chi--squared distribution as prescribed by the 
Wilks theorem. This is a fundamental problem 
of statistical estimation  \citep[e.g., Sec.~29.11, ][]{kendall1979} that has received much attention
in econometrics \citep[e.g.,][]{wald1950,fisher1966,amemiya1985} and other disciplines  \citep[e.g.,][]{raue2009,GODFREY1985}.
Higher--order polynomial terms suffer from this problem, especially for the 
type of noisy data under consideration, whereas the $k$--bin step--function model does not, since
the latter relies on a specific datum to determine the associated model parameter, whereas the former
would be required to estimate parameters ($a_k$ for large $k$) that the data 
are simply unable to determine (see also \citealt{bonamente2024properties} for further discussion on parameter
identifiability).

\subsubsection{Correlation between $\Delta X$ and $\Delta Y$}
\label{sec:correlation}
Numerical simulations also provide empirical estimates of the correlation between
the $\Delta X$ and $\Delta Y$ statistics. For this purpose, we performed
six sets of ten simulations of the type shown in Fig.~\ref{fig:DeltaxDeltaY}, respectively with $f=0.01$
and $f=0.10$, representative of a small and large value of the
systematic error; and for $k=1,3$ and 5--bin step functions.
All simulations have $N=100$ datapoints and 
a $\mu=100$ parent mean, same as in Fig.~\ref{fig:DeltaxDeltaY}.

The mean sample correlation coefficient between $\Delta X$ and $\Delta Y$ 
for the 60 simulations was $r=-0.0165\pm0.0071$ (sample and standard deviation of the mean), with a standard deviation  
of $0.0547$ among all the simulations. The $f=0.1$ simulations (10\% systematic error) had sample correlation
coefficients of $r=-0.0031\pm0.0837$, $-0.0315\pm0.0433$ and $-0.0066\pm0.0349$
respectively for $k=1,3,5$; and the $f=0.01$ simulations (1\% systematic error)
values of $r=-0.0179\pm0.0698$, $-0.0250\pm0.0408$ and $-0.0151\pm0.0483$.
These simulations provide indication that the two statistics are nearly uncorrelated,
at most with a percent--level amount of correlation that appears to be preferentially negative.

\subsubsection{Summary of numerical tests}
The overall success of the Monte Carlo simulations with $k \geq 1$ lend support to the applicability
of the Bessel distribution for $\Delta Y$, and asymptotically of a normal distribution when $k$ is large, according to the results of Sec.~\ref{sec:extension}. In particular, Eq.~\ref{eq:DeltaYK0} and
\ref{eq:alphaPractical} requires the identification of a `suitable'
average $\overline{\mu}$, which in this application 
we successfully set
to the parent mean of the reduced model, i.e., $\overline{\mu}=\mu$. It is therefore reasonable
to speculate that, for more complex models beyond the constant, a similar average can be found.

With regards to the independence between $\Delta X$ and $\Delta Y$,
the tests of Sec.~\ref{sec:correlation} suggest that the two statistics $\Delta X$ and $\Delta Y$ are nearly uncorrelated, likely with a small degree of negative correlation. Since uncorrelation is only a necessary (but not sufficient) condition for independence, dependence between the two variables is still possible. Such dependence will be examined further is the following section.

\section{Distribution  of the \DeltaCSys\ statistic}
\label{sec:DeltaC}
We are now in a position to turn to the overall distribution of the \DeltaCSys\ statistic
in the presence of systematic errors, as defined
in \eqref{eq:DeltaCsys}, for the general case of $m \geq 1$ free
parameters with $1 \leq k\leq m$ free parameters in a nested component. 
 The distribution of the $\Delta C=\Delta X + \Delta Y$ statistic can be obtained
 as the convolution of the two  distributions, assuming 
 independence.

For the \cmin\ \gof\ statistic, 
independence between the individual components $X\sim \chi^2(N-m)$ and $Y \sim N(\hat{\mu}_B,\hat{\sigma}^2_C)$ for the full model and the reduced model
follows from the argument presented in \paperOne\ and summarized in Sec.~\ref{sec:paperOneSummary}. 
For the $\Delta C$ statistic, however, independence between the two
contributing statistics is not guaranteed. In fact, $\Delta X$
is a function of $\Delta \hat{\mu}_i$ according to \eqref{eq:DeltaXApprox}, and $\Delta Y$ according to \eqref{eq:DeltaY} is a function
of $x_{r,i}-x_i \sim N(0, f^2\,\Delta \hat{\mu}_i^2)$, thus a degree of correlation
between $\Delta X$ and $\Delta Y$ may be present, 
which was in fact investigated in Sec.~\ref{sec:correlation}. 

Two alternatives are proposed in order to use the present model of systematic errors to determine the distribution
of $\Delta C$ and therefore enable a quantitative hypothesis testing method: an exact method based on Monte Carlo simulations, and an approximate method that ignores the possible dependence between
$\Delta X$ and $\Delta Y$. The two methods are discussed in the following and tested with numerical simulations.

\subsection{Distribution of \DeltaCSys\ via Monte Carlo simulations}
\label{sec:MonteCarlo}
The most accurate method to
 determine the distribution of $\Delta C$ is via a Monte Carlo simulation that can be summarized
 in the following steps:\\
 1. Generation of a Poisson dataset, drawn from the reduced model that corresponds to the null hypothesis. In this
 paper the constant model was used, but any model can be used, according to the application
 at hand.\\
 2. Regression of the data with the baseline model, leading to the $\hat{\mu}_{r,i}$ best--fit means for
 each of the $N$ bins. No systematic errors are used for this regression.\\
 3. Regression with the full model leading to the $\hat{\mu}_i$ best--fit means, again with no
 systematic errors.\\
 4. Randomization of the best--fit models $\hat{\mu}_{r,i}$  and $\hat{\mu}_i$ according to \eqref{eq:mui}, so that randomized values of the best fit models ($\mu_{r,i}=\hat{\mu}_{r,i}+x_{r,i}$  and $\mu_i=\hat{\mu}_i+x_{i}$) are obtained.\\
 5. Calculation of  the \cmin\ statistic for both the full and the reduced model, using the 
 randomized best--fit models, and calculations of the $\Delta C$ statistic per \eqref{eq:DeltaCsys}.\\
 6. Iteration of steps 1--5 to obtain a large number of Monte Carlo samples for the empirical distribution
 of $\Delta C$.
 
This is the method that was used for the Monte Carlo simulations leading to the eCDF (black curves) of Fig.~\ref{fig:DeltaC}, where the chi--square distributions that apply to the case of no systematic errors are illustrated as blue dashed curves. The eCDF generated in this manner can then be used to determine critical values at any level of confidence for the purpose of hypothesis testing. 

\begin{figure*}
    \centering
    \includegraphics[width=3.5in]{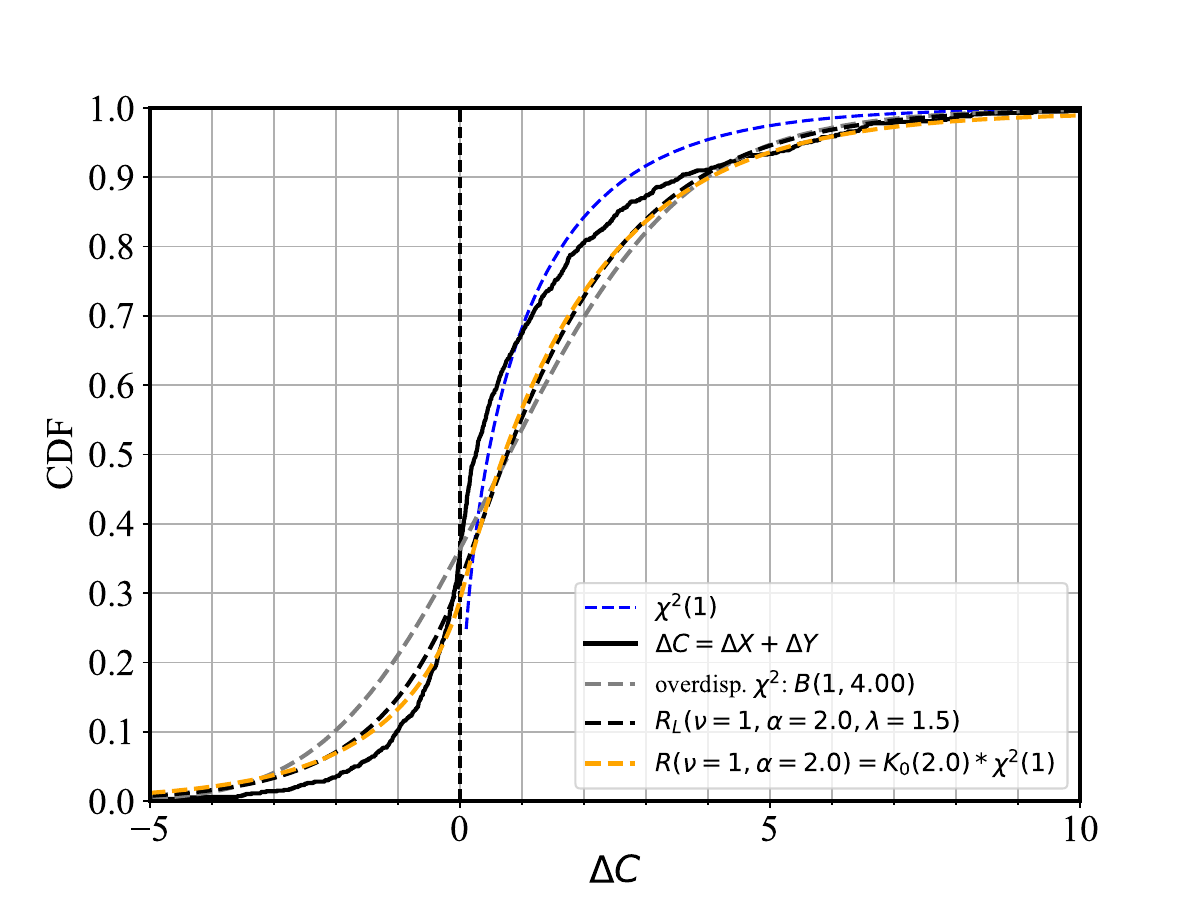}
        \includegraphics[width=3.5in]{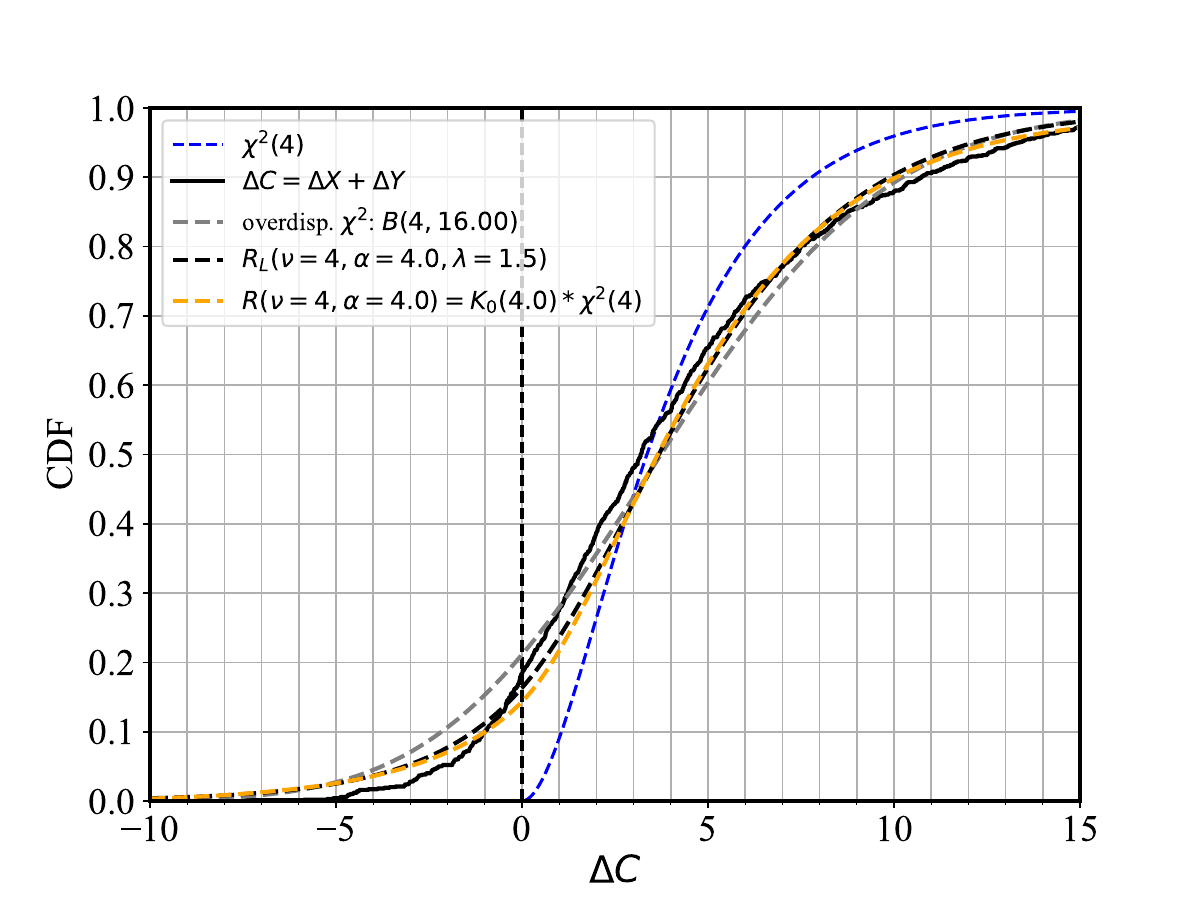}
    \caption{Cumulative distribution functions for the $\Delta C$ statistic using the same models
    as in Fig.~\ref{fig:DeltaxDeltaY}. Overplotted are the $\chi^2(1)$ distribution that applies to the case of
    no systematic errors, and  distributions used as a possible theoretical model: 
    the randomized $\chi^2$ distribution $R_L(\nu,\alpha,\lambda)$, with the Laplace approximation to the Bessel distribution; the  randomized chi--squared distribution $R(\nu,\alpha)$ obtained by direct convolution with
    the Bessel distribution; and the overdispersed chi--squared distribution, in which the Bessel
    distribution was approximated with a normal distribution prior to the convolution with the chi--squared distribution \citep{bonamente2023}.}
    \label{fig:DeltaC}
\end{figure*}

\subsection{Approximate analytic distributions of \DeltaCSys}
\label{sec:randomizedChi}

It is useful to investigate the possibility of finding  an approximate 
analytic form for the distribution for the \DeltaCSys\
statistic, in order to overcome the need for a Monte Carlo simulation for any given application. To this end, it is assumed that
 the probability distribution  of the $\Delta C$ statistic is obtained from the convolution integral
 of the two densities, assuming independence between
 $\Delta X$ and $\Delta Y$.  As discussed earlier in the section, one 
 may expect a degree of dependence 
 between $\Delta X$ and $\Delta C$, and therefore the method described in this section should  
 be considered an approximation. 
 
\subsubsection{The randomized $\chi^2$ distribution $R(\nu,\alpha)$}
Consider two independent random variables $X\sim \chi^2(\nu)$ and $Y \sim K_0(\alpha)$, the latter a Bessel
distribution with parameter $\alpha$ \citep{bonamente2025properties}. The convolution of the
two probability densities
simplifies to 
 \begin{equation}
     f_{R}(z)=\int_{0}^{\infty} f_{K_0}(z-y; \alpha) f_{\chi^2}(y; \nu)\, dy
     \label{eq:fDeltaC}
 \end{equation}
where the positive--value constraint of the $\chi^2(\nu)$ distribution is enforced. 
The density of a Bessel distribution with parameter $\alpha$
is given by 
\begin{equation}
f_{K_0}(x; \alpha) = \dfrac{1}{\pi \alpha} K_0\left(\left| \dfrac{x}{\alpha}\right| \right) 
\label{eq:fK0}
\end{equation}
with $\alpha$ the parameter of the (zero order) Bessel distribution, and $K_0(x)$ the modified Bessel function of order zero \citep[e.g.][]{kotz2001,bonamente2025properties}. 

The  family of distributions
that follow the pdf of \eqref{eq:fDeltaC},  i.e., the convolution of a chi--squared and a normal distribution, will be referred to as the family of 
\emph{randomized} $\chi^2$ distributions $R(\nu,\alpha)$  with parameters $\nu \in \mathbb{N}$ and $\alpha \geq 0$ a real number.~\footnote{This distribution applies also when $\nu \in \mathbb{R}$,
although for this class of applications it is only interesting to consider the case of natural numbers.} This family of distributions are the approximate parent distribution for \DeltaCSys, under the simplifying assumption of independence between the two contributing statistics $\Delta X$ and $ \Delta Y$. 
The distribution in its general form must be evaluated numerically, and it is shown as  dashed orange 
curves in Fig.~\ref{fig:DeltaC}.

\subsubsection{Analytic approximation  $R_L(\nu,\alpha, \lambda)$}
It is useful to seek a simple
approximation to the Bessel distribution \eqref{eq:fK0},  which is part of the
integrand in  \eqref{eq:fDeltaC}, to obtain a simple closed form for the
convolution integral of the randomized chi--squared distribution. An avenue is
provided by the series expansion provided by, e.g., \cite{martin2022}, 
whereby retaining the zero-th order term of the expansion provides a good approximation to the distribution, as shown in \cite{bonamente2025properties}. That approximation, however, has a singularity at $x=0$ that makes it
unsuitable for the task.
Another approximation that circumvents this problem is a zero--mean Laplace 
(also known as double--exponential) distribution with density of the type
\begin{equation}
    f_{\text{L}}(x; \alpha, \lambda) = \dfrac{\lambda}{2 \alpha}e^{-\dfrac{\lambda |x|}{\alpha}}.
    \label{eq:fLaplace}
\end{equation}
 This distribution will be referred to as $\Laplace(\alpha/\lambda)$, i.e., a zero--mean Laplace distribution with parameter $\alpha/\lambda$, with $\lambda$ a fixed constant that serves the purpose to provide a good fit to the Bessel distribution, as discussed in \cite{bonamente2025properties}, and $\alpha$ the same
 parameter as defined for the Bessel distribution. ~\footnote{It is clear that the distribution \eqref{eq:fLaplace} depends only on the ratio $\alpha/\lambda$. However, it is convenient to retain both constants for ease of interpretation.}
By approximating the Bessel distribution with a Laplace distribution of suitable parameter $\lambda$, i.e., 
$f_{K_0}(x; \alpha) \simeq f_{\text{L}}(x; \alpha, \lambda)$,
it is possible to find an approximate analytic form for the density of  \DeltaCSys\ 
(see App.~\ref{app:Laplace} for details). This family of distributions is referred to a $R_L(\nu,\alpha,\lambda)$ with density $f_{R_L}$ reported in \eqref{eq:randChiLaplace}.
 
 The randomized $\chi^2$ distribution $ R(\nu,\alpha)$, shown as dashed black curve in Fig.~\ref{fig:DeltaC}, has the same two key features
 that were observed in the distribution of \DeltaCSys\ from Monte Carlo simulations, namely
 a tail of negative values and a wider right tail, compared to the $\chi^2$ distribution.
It is found that a value of $\lambda=1.5$ provides a reasonable approximation so
that $R_L(\nu,\alpha,\lambda=1.5)\simeq R(\nu,\alpha)$, as shown in App.~\ref{app:Laplace} (see also \citealt{bonamente2025properties}).

\subsubsection{The overdispersed chi--squared distribution $B(\nu,\mu=0,\alpha)$}
\label{sec:overdispersed}
If the $\Delta Y$ statistic is approximated by a normal distribution, which appears to be a suitable
option when $k$ is large (see Sec.~\ref{sec:extension}), then the convolution between a chi--squared distribution $\chi^2(\nu)$ and
a normal $N(0,\alpha)$ leads to an overdispersed chi--squared distribution with zero mean, $B(\nu,\mu=0,\alpha)$. 
This family of distributions has a closed form for its density, and it was described extensively in 
\paperOne\ and \cite{bonamente2024properties}. It is shown as a dashed grey curve in Fig.~\ref{fig:DeltaC} to illustrate
the difference between the Bessel and normal approximations for the distribution of $\Delta Y$, and their effect on \DeltaCSys.

\subsection{Comparison of \DeltaCSys\ distributions}

Fig.~\ref{fig:DeltaC} provides a
 comparison between the eCDFs  obtained from Monte Carlo simulations (solid black curves) and the CDF of the 
 $R(\nu,\alpha)$  randomized $\chi^2$ distribution (orange) and the 
 $R_L(\nu,\alpha,\lambda=1.5)$ 
distribution (dashed black curves) that uses the Laplace approximation to the Bessel distribution,
for two representative cases with $\nu=k=1$ and $\nu=k=4$, for the same simulations as in Fig.~\ref{fig:DeltaxDeltaY}. 
Comparison between these curves can be used to assess the goodness of the hypothesis that were used for
their calculation. Similar Monte Carlo simulations for a range of parent Poisson means $\mu$ and level
of systematic errors $f \ll 1$ were also performed, confirming the general features present in Fig.~\ref{fig:DeltaC}. 
For comparison, the overdispersed chi--squared distribution (where the Bessel distribution is replaced by a normal
distribution, see Sec.~\ref{sec:overdispersed}) is also shown as a dashed grey curve.

Comparison among the distributions reveals the following general features:\\
(a) There is an excellent agreement among the experimental eCDF, the $R(\nu,\alpha)$ and the
$R_L(\nu,\alpha,\lambda=1.5)$ distributions on the value of the $z=0$ quantile, i.e., the CDFs overlap at $z=0$.
This indicates that the probability of a negative value for the statistic can be equally well be estimated by
either of the randomized $\chi^2$ distributions obtained by the convolution.

(b) There appear to be systematic differences between the eCDF and the randomized $\chi^2$ distribution near $z=0$,
i.e., the eCDF is systematically lower for small negative values, and systematically
larger
for small positive values. This is likely attributable to the correlation between $\Delta X$ and $\Delta Y$
that was discussed in Sec.~\ref{sec:correlation}, and that was ignored in the convolution. These systematic 
difference would result in erroneous estimates of $\Delta C$ quantiles at approximately $p \leq 0.9$ or so.

(c) There is good agreement among the distributions in the right tail, making it
possible to use the $R_L(\nu,\alpha)$ or the $R(\nu,\alpha)$ distributions to estimate 
quantiles 
for probabilities $p\geq 0.9$, which is the most common task for
the type of one--sided hypothesis tests for the significance of a nested component.
This feature is very convenient for the use of the analytical approximation $R_L(\nu,\alpha)$  to estimate
critical values of the $\Delta C$ statistic for small null--hypothesis probabilities, i.e., $1-p \leq 0.1$, i.e., in the right tail of the distribution. 

Similar features are displayed by other simulations for different values of the parent mean $\mu$, the level
of systematic errors $f$, and the number of additional parameters $k$. 
The numerical simulations discussed in this section indicate that the most accurate method to
determine the parent distribution for $\Delta C$ in a given application is to perform a Monte Carlo
simulation of the type discussed in Sec.~\ref{sec:MonteCarlo}. Moreover, the success of the
randomized chi--square distributions $R(\nu,\alpha)$ and $R_L(\nu,\alpha,\lambda)$, and of the 
overdispersed chi--squared distribution $B(\nu,\mu=0,\alpha)$ for $k>1$, in reproducing the right tail of the distribution
of \DeltaCSys\
suggest that these approximations can be used to estimate large quantiles ($p\geq 0.9$) of the distribution.
Since the typical hypothesis testing task for the significance of a nested model component
is to estimate these one--sided critical values (with small $1-p$), we conclude that these distribution
can be used with good accuracy for hypothesis testing. More extensive
numerical tests go beyond the scope of this paper, and are deferred to a separate paper.

\section{Applications to hypothesis testing}
\label{sec:HypothesisTesting}

\subsection{Methods of hypothesis testing with \DeltaCSys}
\label{sec:DeltaCTestMethod}

Hypothesis testing for the presence of a nested model component 
consists of comparing the measured $\Delta C$ value with critical values of a parent distribution,
at a given confidence level. In the absence of systematic errors it is expected that $\Delta C \sim \chi^2(k)$,
where $k$ is the number of parameters on the nested component,
with certain restrictions that require the null--hypothesis values of the additional parameter to lie in the 
interior of 
the allowable parameter space (or range), and not at its boundaries \citep[e.g.][]{protassov2002}. 

This model of systematic errors proposes that the \DeltaCSys\ statistics follows approximately
a randomized chi--squared distribution,
$\DeltaCSysEq \sim R(\nu,\alpha_k)$. The two parameters of the
distribution are respectively 
$\nu=k$, representing the number of parameters in the nested component; and  $\alpha$ is a function of the amount of systematic errors according to
\eqref{eq:alphaPractical}. In this paper we have proven this result for the case $k=1$, i.e., for
a nested model component with one additional parameter, and for a constant baseline model. We have
also conjectured that these results may also apply for $k \geq 1$ and for any parameterization
of the baseline and full models (see Sec.~\ref{sec:extension}). We have also provided 
a mathematical conjecture that lays out a possible path towards an exact proof of these results (see App.~\ref{app:conjecture}). 

 Selected critical values according to the
$R(\nu,\alpha)$ distribution, and its approximation $R_L(\nu,\alpha,\lambda)$, are presented in
Table~\ref{tab:critVal}. The parameter $\alpha_k$  combines the amount of systematic errors
($f \ll 1$) with the parent Poisson mean $\mu$ and the number of parameters $k$. For example, $\alpha=2$ may result from a parent mean $\mu=100$
in the presence of a 10\% ($f=0.1$) level of systematic errors for $k=1$, or any other combination that results in the
same product. This model of systematic errors is valid for values of approximately $f \leq 0.1$ or so, as
discussed in \paperOne, and cannot be used for values of $f$ close to one. 

The critical values of Table~\ref{tab:critVal} show that a few--percent level of systematic errors
in a large--mean Poisson dataset has a significant effect on critical values for the $\Delta C$ statistic. For example, $\alpha=2$ for one additional parameter 
results in a $q=1-p=0.10$ critical
value (or $1-p=90$\% confidence) of 3.9, versus the value of 2.7 that applies to the standard chi--squared distribution, i.e., an increase by $\geq 40$~\%. As expected, the effect of systematic errors is that
of reducing the power
of detection of a model component, compared to the parent $\chi^2(k)$
distribution that applies when there are no systematic errors.

\begin{table*}[!t]
    \caption{One--sided critical values for a randomized chi--squared distribution $R_L(\nu,\alpha,\lambda)$ with $\lambda=1.5$ the index of the Laplace distribution that approximates the Bessel distribution $K(\alpha)$.  In parenthesis are 
    reported critical values of a $\chi^2(\nu)$ distribution, which correspond to the case of no systematic errors. Probabilities $q=1-p$ correspond to one--sided critical values $x=F_{R_L}^{-1}(p)$.}
    \label{tab:critVal}
    \centering
    \normalsize
     \begin{tabular}{llllll}
    \hline
    \hline
     & \multicolumn{5}{c}{Critical values $R_L(\nu,\alpha,\lambda=1.5)$}\\
    $\alpha$ & $\nu=1$ & $\nu=2$ & $\nu=3$ & $\nu=4$ & $\nu=5$ \\
\hline
\multicolumn{6}{c}{q=0.317}\\
1.0 & 1.3 (1.0)& 2.5 (2.3)& 3.7 (3.5)& 4.8 (4.7)& 6.0 (5.9)\\
2.0 & 1.7 (1.0)& 2.8 (2.3)& 4.0 (3.5)& 5.1 (4.7)& 6.2 (5.9)\\
3.0 & 2.0 (1.0)& 3.2 (2.3)& 4.3 (3.5)& 5.4 (4.7)& 6.5 (5.9)\\
4.0 & 2.4 (1.0)& 3.5 (2.3)& 4.6 (3.5)& 5.7 (4.7)& 6.8 (5.9)\\
5.0 & 2.7 (1.0)& 3.8 (2.3)& 4.9 (3.5)& 6.0 (4.7)& 7.1 (5.9)\\
\hline
\multicolumn{6}{c}{q=0.100}\\
1.0 & 3.1 (2.7)& 4.8 (4.6)& 6.4 (6.3)& 7.9 (7.8)& 9.4 (9.2)\\
2.0 & 3.9 (2.7)& 5.5 (4.6)& 7.0 (6.3)& 8.4 (7.8)& 9.8 (9.2)\\
3.0 & 4.8 (2.7)& 6.3 (4.6)& 7.7 (6.3)& 9.1 (7.8)& 10.5 (9.2)\\
4.0 & 5.8 (2.7)& 7.2 (4.6)& 8.5 (6.3)& 9.9 (7.8)& 11.2 (9.2)\\
5.0 & 6.8 (2.7)& 8.1 (4.6)& 9.4 (6.3)& 10.7 (7.8)& 12.0 (9.2)\\
\hline
\multicolumn{6}{c}{q=0.050}\\
1.0 & 4.2 (3.8)& 6.2 (6.0)& 8.0 (7.8)& 9.7 (9.5)& 11.2 (11.1)\\
2.0 & 5.1 (3.8)& 7.0 (6.0)& 8.6 (7.8)& 10.2 (9.5)& 11.7 (11.1)\\
3.0 & 6.4 (3.8)& 8.0 (6.0)& 9.6 (7.8)& 11.1 (9.5)& 12.5 (11.1)\\
4.0 & 7.7 (3.8)& 9.2 (6.0)& 10.7 (7.8)& 12.1 (9.5)& 13.5 (11.1)\\
5.0 & 9.1 (3.8)& 10.5 (6.0)& 11.9 (7.8)& 13.3 (9.5)& 14.6 (11.1)\\
\hline
\multicolumn{6}{c}{q=0.010}\\
1.0 & 6.9 (6.6)& 9.4 (9.2)& 11.6 (11.3)& 13.5 (13.3)& 15.3 (15.1)\\
2.0 & 8.1 (6.6)& 10.3 (9.2)& 12.3 (11.3)& 14.1 (13.3)& 15.9 (15.1)\\
3.0 & 9.9 (6.6)& 11.8 (9.2)& 13.6 (11.3)& 15.3 (13.3)& 17.0 (15.1)\\
4.0 & 12.1 (6.6)& 13.8 (9.2)& 15.4 (11.3)& 17.0 (13.3)& 18.5 (15.1)\\
5.0 & 14.5 (6.6)& 16.0 (9.2)& 17.5 (11.3)& 19.0 (13.3)& 20.4 (15.1)\\
\hline
\multicolumn{6}{c}{q=0.001}\\
1.0 & 11.1 (10.8)& 14.1 (13.8)& 16.5 (16.3)& 18.7 (18.5)& 20.7 (20.5)\\
2.0 & 12.2 (10.8)& 15.0 (13.8)& 17.3 (16.3)& 19.4 (18.5)& 21.4 (20.5)\\
3.0 & 14.9 (10.8)& 17.0 (13.8)& 19.1 (16.3)& 21.0 (18.5)& 22.9 (20.5)\\
4.0 & 18.4 (10.8)& 20.1 (13.8)& 21.9 (16.3)& 23.5 (18.5)& 25.2 (20.5)\\
5.0 & 22.2 (10.8)& 23.8 (13.8)& 25.3 (16.3)& 26.8 (18.5)& 28.3 (20.5)\\
\hline
\hline

    \end{tabular}
\end{table*}

\subsection{A case study with astronomical data}
\label{sec:caseStudy}
The methods discussed in this section are further illustrated with the data presented in \cite{spence2023} and \cite{bonamente2023} for the spectra of the quasar \1es. In this example, the independent variable $x$ is
wavelength, and $y$ represent the integer number of photons detected at each wavelength; a complete description of the data
is provided in \cite{spence2023} and in \paperOne.

For this application, the reduced model was a two--parameter power--law
distribution over a range $\pm 1$~\AA\ in
wavelength around an expected absorption line from \ovii\ (six--times ionized atomic oxygen) at $\lambda=25.6545~\AA$. Although the reduced model is not constant, the Poisson mean varies
within the range of $\sim$650--800, i.e, by approximately $\leq \pm 10$\% relative to the mean value, and it is
therefore expected that the considerations used for a constant model (see Sects.~\ref{sec:DeltaYStep} and \ref{sec:DeltaYq1}) apply approximately also to these data. 
The full model consists of the addition of a one--parameter nested component that is akin to the one--bin
step function used in the simulations of Sec.~\ref{sec:DeltaYStep}. Specifically, the model is a narrow \texttt{line}
component in the \texttt{SPEX} software \citep{kaastra1996} that consists of a Gaussian distribution with fixed mean and variance, and variable (positive or
negative) normalization, aimed to detect deviations from the baseline model at a given wavelength. Given that the variance is small, this model affects only  one bin, and it
is thus equivalent to the one--bin step model \citep[see discussion of the model in][]{spence2023}.
When this additional model component is used, the fit to the data result in a statistic $\Delta C=6.6$ for $k=1$ additional free parameter in the nested \texttt{line} model component \citep[see Table 6 of][]{spence2023}.

In the absence of systematic error, the parent $\chi^2(1)$ distribution results in a $p$--value of 0.01, or a 1\%
null hypothesis probability that such $\Delta C=6.6$ improvement in the fit is caused by random fluctuations in the data, and not by the need for the additional nested component. This result is customarily reported as a `detection' of the nested component at the 99\% level of probability.
In the presence of systematic errors, the randomized $\chi^2$ distribution $R(\alpha)$ applies instead
of $\chi^2(1)$, at least
approximately, with $\alpha=2 f \sqrt{\mu}$. This distribution is approximated by the $R_L(\alpha,\lambda)$ with
$\lambda=1.5$ the parameter of the Laplace distribution that replaces the Bessel distribution $K(\alpha)$
in the convolution. Assuming a 5\% level of systematic errors ($f=0.05$), consistent
with known uncertainties in the calibration of the instruments used for the collection of those data \citep{spence2023}, and the
average value of the Poisson mean of $\hat{\mu}\simeq 700$, the parameter of the distribution is $\alpha \simeq 2.7$.
The $R_L$ distribution results in a $p$--value of 0.036, or a 3.6\% null hypothesis probability (see also the critical values of Table~\ref{tab:critVal} for comparison). For a 10\% level of systematic errors ($f=0.1$), the 
null hypothesis probability would increase further to 18.1\%.

This case study illustrates the impact that percent--level systematic errors have on the detection of nested
model components in the regression of Poisson data. The approximate method developed in this paper shows
that the \ovii\ line tentatively detected at 99\% significance in \cite{spence2023} may in fact
be the result of fluctuations associated with systematic errors. In fact, its null--hypothesis probability increases to 3.6\% and to 18.1\%, respectively for 5\% and 10\% systematic errors, making it more likely to occur as a random fluctuation.

\section{Discussion and Conclusions}
\label{sec:conclusions}

This paper has presented a new method to include systematic errors
for hypothesis testing of a nested model component in the Poisson regression to 
a parametric model. This is a common task in data analysis, especially for astronomy 
and the physical sciences where data are often in the form of integer counts (e.g., photons) as a
function of one or more independent variables (e.g., time or energy). The main result of this 
paper is that the fit statistic \DeltaC, which was proposed by \cite{cash1979} as the statistic
of choice for nested model components, can be generalized to a \DeltaCSys\ statistic whose null--hypothesis distribution can be either simulated with a simple numerical procedure, or approximated analytically. Accordingly, critical
values of the distribution can be easily calculated for the purpose of hypothesis testing.

The methods presented in this paper are derived from the general framework for the inclusion
of systematic errors presented in \paperOne, where the distribution of the 
\gof\ statistic \cmin\ in the presence
of systematic errors was presented. For the \DeltaC\ statistic, we derived exact results in the
case of a simple constant parent model for $k=1$ additional parameter, and surmised that the results
can be generalized to more complex models and to the case of $k>1$ parameters. 
The approximate parent distribution of \DeltaCSys\ is referred to as the randomized chi--squared distribution
$R(\nu,\alpha)$, and it is due to the convolution of a $\chi^2(\nu)$ distribution that applies in the absence of systematic errors (i.e, the distribution of the $\Delta X$ statistic) , and a Bessel distribution $K_0(\alpha)$ that models the novel contribution
to the statistic provided by systematics (the $\Delta Y$ statistic).
Extensive numerical simulations were also performed that 
support the proposed results, and an analytic approximation $R_L(\nu,\alpha,\lambda)$ that is made
possible by the approximation of the Bessel distribution with a Laplace (or double--exponential) distribution with additional parameter $\lambda$. This analytic approximation is especially convenient in terms of computational speed for quantiles and other properties of the distribution.

Although the model of systematic errors can be applied to any type of regression, the results presented in this paper apply to the extensive ($N \gg 1$ data points) and 
large--mean case ($\mu \gg 1$), same as in \paperOne. In particular, in Sec.~\ref{sec:DeltaYGeneral}
we have assumed that the distribution of choice for the $M_i$ variables that determine
the distribution of systematic errors is Gaussian. As also discussed in \paperOne, this
assumption is only meaningful when $f\ll 1$ (which is a general feature of the model) and when the
parent mean is large, so that negative values in the distribution are unlikely. In fact, negative
values for $M_i$ are not tenable, since $M_i$ represents the mean of a Poisson distribution that 
can obviously only be non--negative. For applications in the low--mean regime, which are not
considered in this paper, different distributions (such as the positive--definite  gamma) must therefore be used, which will lead to modifications to the results presented in this paper.

The effect of systematic error on the \DeltaC\ statisic can be significant even for
a moderate level of systematics, e.g., at the few percent level. The results presented in this paper show that critical
values of the \DeltaCSys\ statistic, using the proposed randomized chi--squared distribution, are significantly larger than those using the traditional chi--squared distribution, as we illustrated
with the case of a previously claimed detection of an absorption line by \cite{spence2023}. 
For the type of
Poisson applications discussed in this paper and that are especially common in astrophysics,
it is therefore recommended that the assessment of the presence of a nested model component 
be carried out with the proper \DeltaC\ statistic,
and that the effect of known levels of systematics be included by considering the distributions of
\DeltaCSys\
presented in this paper.

\appendix
\section{Approximations}

\subsection{Approximations of the $\Delta Y$ statistic}
\label{app:randomization}
Starting with \eqref{eq:YStat}, the $\Delta Y$ statistic can be approximated as
\[
\Delta Y = 2 \sum\limits_{i=1}^N (x_{r,i} -x_i) -y_i\left( \dfrac{x_{r,i}}{\hat{\mu}_{r,i}} - \dfrac{x_i}{\hat{\mu}_i}\right)
+ \dfrac{y_i}{2}\left( \left(\dfrac{x_{r,i}}{\hat{\mu}_{r,i}}\right)^2 - \left(\dfrac{x_i}{\hat{\mu}_i}\right)^2 \right) + \dots
\]
where $\hat{\mu}_i$ and $\hat{\mu}_{r,i}$ have been  randomized simultaneously
according to \eqref{eq:mui}.
The effect of this randomization of the two variables $M_i$ and $M_{r_i}$
is to make
\[
\dfrac{x_{r,i}}{\hat{\mu}_{r,i}} = \dfrac{x_i}{\hat{\mu}_i} = \beta\, f,
\]
where $\beta \in \mathbb{R}$ is a number that represents the randomization of the $M_i$ and $M_{r,i}$
random variables, and it is the \emph{same} for both variables;
and $f \ll 1$ is the constant fractional systematic error according to \eqref{eq:f}. Therefore the $\Delta Y$ statistic becomes
\[
\Delta Y = 2 \sum\limits_{i=1}^N (x_{r,i} -x_i).
\]
This equation holds regardless of the number of terms used in the logarithmic expansion, and it is a result of the
type of randomization made for the $M_i$ and $M_{r_i}$ random variables that models the presence of systematic errors.

\subsection{Approximation of the Bessel distribution with a Laplace distribution and convolution with a chi--squared distribution}
\label{app:Laplace}

With the Laplace distribution \eqref{eq:fLaplace} replacing the Bessel distribution \eqref{eq:fDeltaC}, it is now possible to proceed to the convolution \eqref{eq:fDeltaC} between a chi--squared $\chi^2(\nu)$ and 
a Laplace $\Laplace(\alpha/\lambda)$ distribution. The two distributions are
\begin{equation}
    \begin{cases}
        f_{\chi^2}(x; \nu)=\left(\dfrac{1}{2}\right)^{\nu/2} \dfrac{1}{\Gamma(\nu/2)} e^{-x/2}x^{\nu/2-1}\; \text{ for } x \geq 0,\\[10pt]
        f_{\text{L}}(x; \alpha, \lambda) = \dfrac{\lambda}{2 \alpha}e^{-\dfrac{\lambda |x|}{\alpha}} \; \text{ for } x \in \mathbb{R},
    \end{cases}
\end{equation}
and the convolution leads to the probability distribution of
\DeltaCSys, 
\begin{equation}
\begin{aligned}
     f_{R_L}(z)= 
     \dfrac{\lambda/2}{\alpha\, \Gamma(\nu/2)\,  2^{\nu/2}} \times
    \begin{cases}
        \displaystyle\int_{0}^{\infty} e^{-\lambda \left(\dfrac{y-z}{\alpha}\right)} y^{\nu/2-1} e^{-y/2} dy 
        & \text{for } \; z<0 \\[10pt] 
       \displaystyle\int_{0}^{z} e^{-\lambda \left(\dfrac{z-y}{\alpha}\right)} y^{\nu/2-1} e^{-y/2} dy +
       \displaystyle\int_{z}^{\infty} e^{-\lambda \left(\dfrac{y-z}{\alpha}\right)} y^{\nu/2-1} e^{-y/2} dy & \text{for } z\geq0, 
    \end{cases}
    \end{aligned}
\end{equation}
 where the absolute value required the separation of the integral for $z>0$, respectively for the range
 of integration $y<z$
 and $y\geq z$. The approximation of the Bessel distribution with a Laplace distribution
therefore makes it possible to write the integral above as
\begin{equation}
\begin{aligned}
     f_{R_L}(z)= 
     \dfrac{\lambda/2}{\alpha\, \Gamma(\nu/2)\,  2^{\nu/2}} \times
     \begin{cases}
       e^{\displaystyle cz} \underbracket{\displaystyle\int_{0}^{\infty} e^{\displaystyle -c_1 y}\, y^{\nu/2-1}dy}_{I_1} 
        & \text{for } \; z<0 \\[10pt] 
       e^{\displaystyle -cz} \underbracket{\displaystyle\int_{0}^{z} e^{\displaystyle -c_2 y}\, y^{\nu/2-1}dy}_{I_2} +
       e^{\displaystyle cz} \underbracket{\displaystyle\int_{z}^{\infty} e^{\displaystyle -c_1 y}\, y^{\nu/2-1} dy}_{I_3} & \text{for } z\geq0, 
    \end{cases}
    \end{aligned}
    \label{eq:fDeltaCApp}
\end{equation}
where $c=\lambda/\alpha>0$ and $c_1=c+1/2>0$ are two positive constants, and $c_2=1/2-c$ can have either sign.
The three integrals in  \eqref{eq:fDeltaC} can be evaluated as follows. The two integrals $I_1$ and $I_3$ are immediately found for all values of $\nu$, 

\[ I_1= \displaystyle\int_{0}^{\infty} e^{\displaystyle -c_1 y}\, y^{\nu/2-1}dy
= \dfrac{\Gamma(\nu/2)}{c_1^{\nu/2}} 
\]
 and likewise
\[
I_3=\displaystyle\int_{z}^{\infty} e^{\displaystyle -c_1 y}\, y^{\nu/2-1} dy=\dfrac{\Gamma(\nu/2,c_1 z)}{c_1^{\nu/2}}
\]
where $\Gamma(s,x)$ is the upper incomplete
 gamma function. For $I_2$,
\def\erfi{\text{erfi}}
\def\erf{\text{erf}}
 \[
I_2= \displaystyle\int_{0}^{z} e^{\displaystyle -c_2 y}\, y^{\nu/2-1}dy=
\begin{cases}
\dfrac{\gamma(\nu/2,c_2 z)}{c_2^{\nu/2}}, \, \text{ if } c_2>0, \text{ or } \lambda/\alpha<1/2\\[10pt]    
\dfrac{z^{\nu/2}}{\nu/2},   \, \text{ if } c_2 = 0 \text{ or } \lambda/\alpha = 1/2\\[10pt]
I_{2,n}(\nu), 
\, \text{ if } c_2 < 0, \text{ or } \lambda/\alpha > 1/2\\
\end{cases}
 \]
 where $\gamma(s,x)$ is the lower incomplete gamma function, and $I_{2,n}(\nu)$ must be evaluated as
 a function of $\nu$ for negative values of $c_2$ that make the exponent positive in the integrand.
For $\nu=1$, the integral $I_{2,n}(\nu)$ is 
\[
I_{2,n}(\nu=1) = \int_{0}^z e^{y |c_2|} y^{-1/2} dy = \sqrt{\dfrac{\pi}{|c_2|}} \cdot  \erfi(\sqrt{z |c_2|})
\]
where $\erfi(x)=-i\, \erf(ix)$ is the imaginary error function, with $\erf(x)$ the usual
 error function (see Appendix~\ref{app:math} for relevant integrals).
Analytic expressions for $I_{2,n}(\nu)$ for $\nu>1$ can be obtained upon integration by parts and recursion. The first few integrals are reported below:
\[
\begin{cases}
  I_{2,n}(\nu=2) =   \int_{0}^z e^{y |c_2|}\, dy = \left. \dfrac{e^{y |c_2|}}{|c_2|}\right|_0^z;\\[10pt]
  I_{2,n}(\nu=3) = \int_{0}^z e^{y |c_2|}\, y^{1/2} dy = \left. \dfrac{e^{y |c_2|} y^{1/2}}{|c_2|}\right|_0^z - \dfrac{1}{2 |c_2|} I_{2,n}(\nu=1);\\[10pt]
  I_{2,n}(\nu=4) = \int_{0}^z e^{y |c_2|}\, y dy = \left. \dfrac{e^{y |c_2|} y }{|c_2|}\right|_0^z 
  - \dfrac{1}{|c_2|} I_{2,n}(\nu=2);\\[10pt]
  I_{2,n}(\nu=5) = \int_{0}^z e^{y |c_2|}\, y^{3/2} dy = \left. \dfrac{e^{y |c_2|} y^{3/2} }{|c_2|}\right|_0^z - \dfrac{3}{2 |c_2|} I_{2,n}(\nu=3);\\[10pt]
  \text{etc.}
\end{cases}
\]

 The approximate distribution for $\Delta C$ is therefore
 \begin{equation}
    f_{R_L}(z; \nu, \alpha, \lambda)= \dfrac{\lambda/2}{\alpha \sqrt{2 \pi}} \times
    \begin{cases} 
    \begin{aligned} & e^{\displaystyle cz} I_1 
        & \text{for } \; z<0 \\[5pt] 
       & e^{\displaystyle -cz}\, I_2 + e^{\displaystyle cz}\, I_3
      & \text{for } z\geq0.
       \end{aligned}
    \end{cases}
    \label{eq:randChiLaplace}
\end{equation}
where $\nu \in \mathbb{N}$ is the number of \dof\ of the $\Delta X \sim \chi^2(\nu)$ variable, $\alpha \geq 0$ a real number representing the parameter of the Bessel distribution that models the overdispersion $\Delta Y \sim K_0(\alpha)$, and $\lambda$ a fixed parameter that is used to approximate
the Bessel distribution with a Laplace distribution, typically $\lambda =1.5$
Fig.~\ref{fig:randChiExp} illustrates the $R(\nu,\alpha)$ and $R_L(\nu,\alpha,\lambda=15)$ 
distribution for representative values of the parameters.

 \begin{figure}
     \centering
     \includegraphics[width=3.5in]{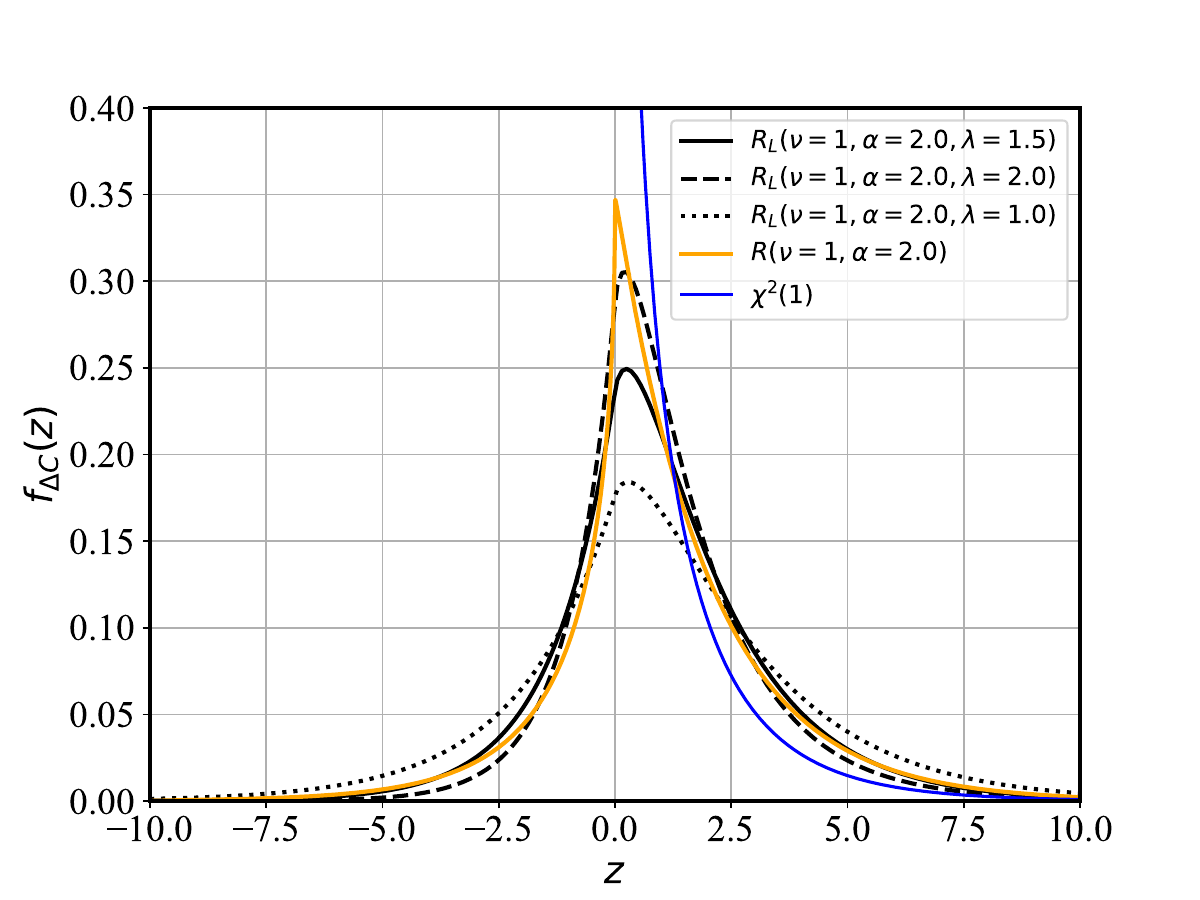}
     \includegraphics[width=3.5in]{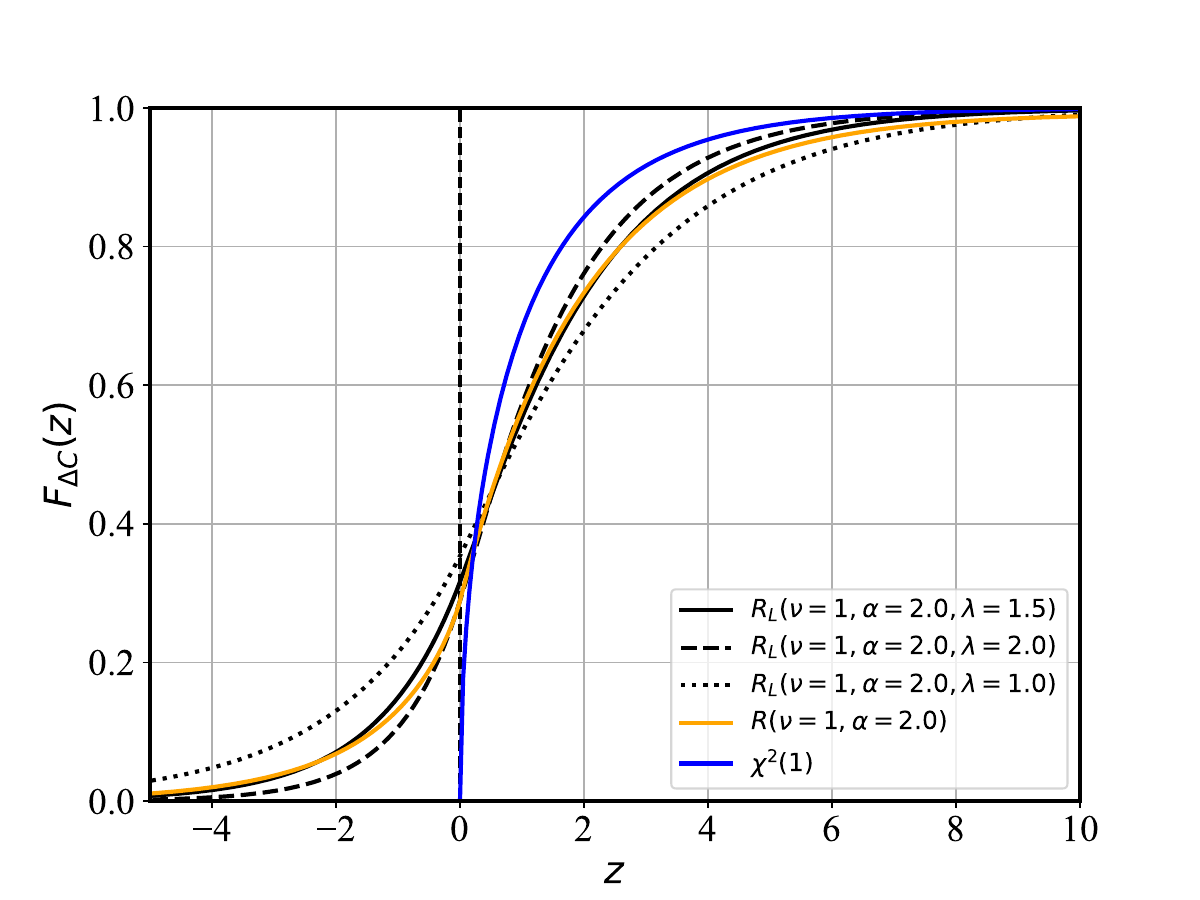}
     \caption{Illustration of the probability density function (left) and the cumulative
     distribution (right) of the randomized $\chi^2$ distribution $R(\nu, \alpha)$, along with the approximation
     $R_L(\nu,\alpha)$ that uses
     the Laplace approximation to the Bessel distribution to yield an analytic density. This distribution is intended as the parent
     distribution for the $\Delta C$ statistic for model component with one additional
     nested parameter ($\nu=1$).}
     \label{fig:randChiExp}
 \end{figure}

\section{A mathematical conjecture on the general distribution of $\Delta Y$}
\label{app:conjecture}
Under the null hypothesis that the reduced model is the parent model, a full model with a nested
component featuring $k\geq 1$ adjustable parameters results in $\Delta X \sim \chi^2(k)$, according to the
Wilks theorem described in Sec.~\ref{sec:DeltaC}. This result applies asymptotically to any model
parameterization. Another general result is the approximation \eqref{eq:DeltaY}
for the $\Delta Y$ statistic, which
applies to small values of the intrinsic model variance, $f \ll 1$, regardless of
model parameterization.
The starting point towards a generalization of Theorem~\ref{th:DeltaY} is the approximation \eqref{eq:DeltaXLemma1} for $\Delta X$, 
\[
            \Delta X \simeq \sum_{i=1}^N \dfrac{ \Delta \hat{\mu}_i^2}{\hat{\mu}_{r,i}} \sim \chi^2(k)
\]
with $\Delta \hat{\mu}_i = \hat{\mu}_{r,i}-\hat{\mu}_i$. In general, the parent
reduced model means $\mu_{r,i}$ are not constant, thus preventing the straightforward parameterization that leads to 
\eqref{eq:DeltaXLemma1} and therefore the proof of Lemma~\ref{lemma1}. Instead, we propose the following:

\begin{conjecture}[Reparameterization of $\Delta X$ in $k$ independent terms]
\label{conjectureDeltaX}
For a dataset with $N$ independent points, and a full model with $k$ additional parameters
in a nested component, it is possible to
find a re--parameterization of $\Delta X$ in \eqref{eq:DeltaXApprox} such that
\begin{equation}
    \Delta X = \sum_{j=1}^k \dfrac{(\Delta \hat{\overline{\mu}}_j)^2}{\overline{\mu}_{r,j}} 
    \coloneqq \sum_{j=1}^k \dfrac{\overline{X}_j^2}{\overline{\mu}_{r,j}} 
    \simeq \dfrac{1}{\overline{\mu}_r} \sum_{j=1}^k \overline{X}_j^2
    \label{eq:DeltaCLemma2}
\end{equation}
where $\Delta {\overline{\mu}}_j= {\overline{\mu}}_{r,j} - {\overline{\mu}}_j \coloneqq \overline{X}_j$ is 
the 
difference between suitable averages of the parent means of the reduced and full models.
\end{conjecture}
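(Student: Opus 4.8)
The plan is to prove the conjecture by leveraging the two facts already established in the excerpt: that $\Delta X \sim \chi^2(k)$ asymptotically (Wilks's theorem) and that the quadratic approximation \eqref{eq:DeltaXApprox} holds in the large--mean regime. The organizing principle is the elementary algebraic fact that any $\chi^2(k)$ variable decomposes as a sum of $k$ independent $\chi^2(1)$ contributions; the real work is to realize those $k$ contributions as normalized squares of genuine differences of averaged parent means, i.e. to exhibit the coordinate change that carries the $N$--bin sum in \eqref{eq:DeltaXApprox} into the $k$--term form of \eqref{eq:DeltaCLemma2}.

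First I would read \eqref{eq:DeltaXApprox} as a weighted quadratic form $\Delta X \simeq \sum_{i=1}^N \Delta\hat{\mu}_i^2/\hat{\mu}_{r,i}$ in the bin--space vector of mean differences $\Delta\hat{\mu}_i = \hat{\mu}_{r,i}-\hat{\mu}_i$. Under the null hypothesis these differences do not range freely over $\mathbb{R}^N$: asymptotically they are confined to the $k$--dimensional tangent subspace spanned by the derivatives of the model means with respect to the $k$ nested parameters. Restricting the weighted form to this subspace produces a $k\times k$ symmetric positive--definite matrix $\mathcal{I}$, and since $\Delta X \sim \chi^2(k)$ the coordinate vector $\delta$ on the subspace satisfies $\delta \to N(0,\mathcal{I}^{-1})$. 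Diagonalizing $\mathcal{I} = Q^{T}\Lambda Q$ and rotating to $\phi = Q\,\delta$ decouples the form into $\Delta X \simeq \sum_{j=1}^{k}\lambda_j \phi_j^2$, a sum of $k$ mutually orthogonal terms.

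Next I would translate the rotated coordinates back into mean--space quantities. Each $\phi_j = (Q\,\delta)_j$ is a fixed linear combination of the parameter deviations, and through the Jacobian of the map $\theta \mapsto \mu(\theta)$ it corresponds to a particular weighted combination of the bin--by--bin differences $\Delta\hat{\mu}_i$. I would set $\overline{X}_j \coloneqq \phi_j$ and $\overline{\mu}_{r,j} \coloneqq \lambda_j^{-1}$, so that $\overline{X}_j^2/\overline{\mu}_{r,j} = \lambda_j \phi_j^2$. Because $\delta \to N(0,\mathcal{I}^{-1})$ under the null, the rotated vector satisfies $\phi \to N(0,\Lambda^{-1})$, so the $\phi_j$ are asymptotically independent with $\phi_j \sim N(0,\lambda_j^{-1}) = N(0,\overline{\mu}_{r,j})$. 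Hence each $\overline{X}_j^2/\overline{\mu}_{r,j}$ is an independent $\chi^2(1)$ and their sum reproduces \eqref{eq:DeltaCLemma2}, recovering the $k=1$ content of Lemma~\ref{lemma1} as the special case. The final approximate step of factoring a single $\overline{\mu}_r$ out of the sum then holds whenever the $k$ values $\overline{\mu}_{r,j}$ are mutually close, which is precisely the slowly--varying, large--mean regime assumed throughout.

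The hard part will be the interpretive step in the previous paragraph: showing that the rotated quantities $\overline{X}_j$ and $\overline{\mu}_{r,j}$ are genuinely expressible as differences of \emph{averages of the parent means}, rather than merely as formal eigen--coordinates of $\mathcal{I}$. The diagonalization always succeeds mechanically, but the substantive claim of the conjecture is that the resulting combinations coincide with meaningful averages of $\mu_{r,i}$ over the range of the independent variable, reducing to the single $\overline{\mu}$ of Lemma~\ref{lemma1} when the reduced model is constant. Establishing this correspondence for an arbitrary model parameterization—controlling how the Jacobian $\partial\mu/\partial\theta$ reshapes the bin weights $1/\hat{\mu}_{r,i}$ into a recognizable average—is the crux, and is the reason the statement is advanced here as a conjecture rather than a theorem.
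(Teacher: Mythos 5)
The first thing to note is that the paper does not prove this statement at all: it is advanced explicitly as a conjecture, supported only by the heuristic observation that a $\chi^2(k)$ variable is the sum of $k$ i.i.d.\ squared standard normals, by analogy with the one--bin case of Lemma~\ref{lemma1}, and by the Monte Carlo evidence for the $k$--bin step--function and low--order polynomial models. So your attempt must be judged on its own terms, and on those terms it contains a genuine gap --- one you yourself flag in your closing paragraph. The tangent--space argument (restrict the weighted quadratic form to the $k$--dimensional subspace, diagonalize $\mathcal{I}=Q^{T}\Lambda Q$, rotate to $\phi=Q\,\delta$) establishes only that $\Delta X$ splits asymptotically into $k$ independent $\chi^2(1)$ terms. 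That much is already immediate from $\Delta X \sim \chi^2(k)$ via the Wilks theorem and is not the content of the conjecture. The substantive claim is that the $j$--th numerator $\overline{X}_j$ can be realized as a difference of \emph{averages of the parent Poisson means} and that, simultaneously, the $j$--th denominator $\overline{\mu}_{r,j}$ is the corresponding average of the reduced--model means. Your assignments $\overline{X}_j \coloneqq \phi_j$ and $\overline{\mu}_{r,j}\coloneqq \lambda_j^{-1}$ are pure relabelings: each term $\lambda_j\phi_j^2$ is invariant under the rescaling $\phi_j \mapsto c\,\phi_j$, $\lambda_j \mapsto \lambda_j/c^2$, so nothing in the construction forces $\lambda_j^{-1}$ to be a Poisson--mean average. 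Indeed, if the nested parameter is, say, a power--law index, then $\lambda_j^{-1}$ is the asymptotic variance of a dimensionless parameter combination (typically of order $1/(N\mu)$), nowhere near any average of the $\hat{\mu}_{r,i}$; only after an appropriate rescaling through the Jacobian $\partial\mu/\partial\theta$ could numerator and denominator jointly acquire the mean--space interpretation the conjecture demands, and exhibiting that rescaling is precisely the unresolved step.

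The reason this identification cannot be waved through is that it is exactly what the paper needs downstream: Corollaries~\ref{th:corollary1} and \ref{th:corollary2} require the variance of the sum of the actual bin--wise deviations $\sum_{i=1}^N (\hat{\mu}_{r,i}-\hat{\mu}_i)$ --- the quantity that enters $\Delta Y$ through the randomization variance $f^2\,\Delta\hat{\mu}_i^2$ --- to equal $k\cdot\overline{\mu}_r$. An eigen--rotation of parameter space says nothing about $\sum_i \Delta\hat{\mu}_i$: the eigenvector weights are neither positive nor normalized, so the $\phi_j$ need not be averages of anything, and recombining them need not reconstruct the mean--space deviation sum. In short, your scaffold is a sensible formalization of why one might believe the conjecture, and your honesty about where it stalls is exactly right, but as written the proposal reduces the conjecture to the same open identification problem the authors leave open rather than resolving it; it is not a proof.
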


Conjecture~\ref{conjectureDeltaX} posits that the differences between the full and the reduced
models in the $N$ bins can be expressed as if they were  
concentrated in $k$ \emph{independent} bins, 
as was the case for the one--bin model of Sec.~\ref{sec:DeltaYq1}. The conjecture is based on the known property of a $\chi^2(k)$ random variable, which can be written as the sum of the squares of $k$ 
independent and identically distributed standard normal distributions. Therefore, each of the $k$ terms in \eqref{eq:DeltaCLemma2} 
must be such that
\[
\dfrac{\overline{X}_j^2}{\overline{\mu}_{r,j}}  \sim \chi^2(1) 
\]
thus implying that
\begin{equation}
    \overline{X}_j =\Delta {\overline{\mu}}_j \sim N(0, \overline{\mu}_{r,j})\; \text{ for } j=1,\dots,k.
\label{eq:Xj}
\end{equation}

The conjecture thus consists in the identification of suitable averages of the means, which are indicated as ${\overline{\mu}}_{r,j}$ and 
${\overline{\mu}}_j$. Moreover, provided the means $\overline{\mu}_{r,j}$ are sufficiently similar to one another, it ought to be possible to identify an overall average (reduced) mean $\overline{\mu}_r$ that therefore makes the statistical problem 
identical to that of \eqref{eq:DeltaXLemma1}.

If Conjecture~\ref{conjectureDeltaX} applies, it would therefore follow that:
\begin{corollary}[General distribution of sum of model deviations]
\label{th:corollary1}
Under the null hypothesis that the parent model is the reduced model, and that the full model
has $k \geq 1$ additional free parameters in a nested component, then the sum of the deviations is
\begin{equation}
\sum_{i=1}^N (\hat{\mu}_{r,i}-\hat{\mu}_i) \sim N(0, k \cdot \overline{\mu}_r).
\end{equation}
in the asymptotic limit of large parent Poisson means, where $\overline{\mu}_r$ is a suitable average of
the parent Poisson mean for the data, as surmised in Conjecture~\ref{conjectureDeltaX}.
\end{corollary}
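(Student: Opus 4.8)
The plan is to take Conjecture~\ref{conjectureDeltaX} as given and extract from it the full distributional content needed for the sum of deviations. The conjecture asserts that $\Delta X$ admits a reparameterization into $k$ independent terms $\overline{X}_j^2/\overline{\mu}_{r,j}$, each distributed as $\chi^2(1)$, so that the aggregate is $\chi^2(k)$. The target statement concerns the raw signed sum $\sum_{i=1}^N \Delta \hat{\mu}_i$ rather than $\Delta X$ itself, so the first task is to relate this sum to the reparameterized quantities $\overline{X}_j$.

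First I would recover the marginal distribution of each component. Since $\overline{X}_j^2/\overline{\mu}_{r,j} \sim \chi^2(1)$ and, under the null, the model deviations are symmetric about zero (positive and negative fluctuations being equally likely), I can apply the same symmetry argument used in the proof of Lemma~\ref{lemma1}---namely that the signed square root of a symmetric $\chi^2(1)$ variable is standard normal---to conclude that $\overline{X}_j \sim N(0,\overline{\mu}_{r,j})$ for each $j$. Invoking the assumption built into Conjecture~\ref{conjectureDeltaX} that the $\overline{\mu}_{r,j}$ are mutually close and can be replaced by a single average $\overline{\mu}_r$, this sharpens to $\overline{X}_j \sim N(0,\overline{\mu}_r)$, and the independence of the $k$ terms that underlies the $\chi^2(k)$ decomposition carries over to the $\overline{X}_j$.

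Next I would identify the total sum of bin deviations with the sum of the reparameterized components, $\sum_{i=1}^N \Delta \hat{\mu}_i = \sum_{j=1}^k \overline{X}_j$, generalizing the $k=1$ identity implicit in Lemma~\ref{lemma1} where the entire sum collapses onto a single effective deviation. With this identification in hand, the result follows immediately from the additivity of independent normal variables: a sum of $k$ independent $N(0,\overline{\mu}_r)$ variates is distributed as $N(0, k\cdot\overline{\mu}_r)$, which is the claimed distribution.

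The hard part will be justifying the identification $\sum_{i=1}^N \Delta \hat{\mu}_i = \sum_{j=1}^k \overline{X}_j$, since the reparameterization in Conjecture~\ref{conjectureDeltaX} is stated at the level of $\Delta X$ (a sum of squares) rather than at the level of the signed sum, and the passage between them implicitly requires the cross--product terms among the $\overline{X}_j$ to be negligible---the analog of the zeroth--order cancellation $\sum_i \Delta \hat{\mu}_i \simeq 0$ invoked in Lemma~\ref{lemma1}. The symmetry hypothesis needed to take signed square roots is a secondary obstacle, as is the replacement of the individual $\overline{\mu}_{r,j}$ by a common average $\overline{\mu}_r$; both are expected to hold asymptotically in the large--mean regime, but controlling them quantitatively is precisely what separates the present heuristic argument from a fully rigorous proof, and is the reason the result is advanced here as a corollary of a conjecture rather than as a theorem.
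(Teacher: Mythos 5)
Your proposal takes essentially the same route as the paper, which proves Corollary~\ref{th:corollary1} only conditionally on Conjecture~\ref{conjectureDeltaX} and states that it ``would be proven using the same methods as Lemma~\ref{lemma1}'': namely, extracting $\overline{X}_j \sim N(0,\overline{\mu}_{r,j})$ from the $\chi^2(1)$ terms via the symmetric square--root argument, replacing the $\overline{\mu}_{r,j}$ by a common average $\overline{\mu}_r$, and summing the $k$ independent normals. Your explicit flagging of the identification $\sum_{i=1}^N \Delta\hat{\mu}_i = \sum_{j=1}^k \overline{X}_j$ and of the symmetry hypothesis as the unproven hinges is consistent with, and somewhat more candid than, the paper's own one--sentence conditional proof.
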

Corollary~\ref{th:corollary1} would be proven using the same methods as Lemma~\ref{lemma1}.
If Corollary~\ref{th:corollary1} holds for any model parameterization, then it would immediately be possible to prove the
following:
\begin{corollary}[General distribution of $\Delta Y$] Under the same assumptions as corollary~\ref{th:corollary1},
\label{th:corollary2}
    \begin{equation}
        \Delta Y = 2 \sum_{i=1}^N (x_{r,i}-x_i) \sim K_0( \alpha_k),
    \end{equation}
    with parameter
    \begin{equation}
        \alpha_k= 2 f \sqrt{k\, \overline{\mu}_r}
        \label{eq:alphaq}
    \end{equation}
    where $k$ is the number of free parameters in the nested model
    component.
\end{corollary}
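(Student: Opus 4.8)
The plan is to follow the proof of Theorem~\ref{th:DeltaY} essentially verbatim, substituting the single-parameter, constant-model input of Lemma~\ref{lemma1} with its general counterpart, Corollary~\ref{th:corollary1}. I would begin, as in Sec.~\ref{sec:DeltaYGeneral}, by noting that the perfect correlation between the simultaneous randomizations $x_{r,i}$ and $x_i$ forces the bin-wise difference to be normal with the variance recorded in \eqref{eq:Deltaxi}, i.e. $x_{r,i}-x_i \sim N(0, f^2(\hat{\mu}_{r,i}-\hat{\mu}_i)^2)$. This statement holds for any model parameterization, hence for an arbitrary reduced model carrying $k \geq 1$ nested parameters, which is exactly the generality required here.

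Next I would assemble the sum. Under the assumption of independent bin-by-bin randomization, the contributions add in quadrature, so that conditional on the best-fit models $\sum_{i=1}^N (x_{r,i}-x_i) \sim N\!\left(0, f^2 \sum_{i=1}^N (\hat{\mu}_{r,i}-\hat{\mu}_i)^2\right)$. Invoking the zeroth-order argument already used in Lemma~\ref{lemma1}, where $\sum_i(\hat{\mu}_{r,i}-\hat{\mu}_i)\simeq 0$ under the null hypothesis renders the cross terms negligible, I would replace $\sum_i(\hat{\mu}_{r,i}-\hat{\mu}_i)^2$ by $\left(\sum_i(\hat{\mu}_{r,i}-\hat{\mu}_i)\right)^2$. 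This casts $\Delta Y = 2\sum_i(x_{r,i}-x_i)$ into the compounded form of \eqref{eq:compoundGauss}: a zero-mean normal whose variance is $a^2\xi^2$, with $a=2f$ and $\xi=\sum_i(\hat{\mu}_{r,i}-\hat{\mu}_i)$.

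The single new ingredient is then the law of $\xi$. Whereas Theorem~\ref{th:DeltaY} obtained $\xi\sim N(0,\mu)$ from Lemma~\ref{lemma1} for the constant model with $k=1$, here I would feed in Corollary~\ref{th:corollary1} to get $\xi \sim N(0, k\,\overline{\mu}_r)$. Compounding a zero-mean normal of variance $a^2\xi^2$ over the zero-mean normal $\xi$ reproduces, by the very definition of the Bessel distribution used in \eqref{eq:compoundGauss}, the law $\Delta Y \sim K_0(\alpha_k)$ with scale parameter $\alpha_k = a\sqrt{\Var(\xi)} = 2f\sqrt{k\,\overline{\mu}_r}$, which is precisely \eqref{eq:alphaq}. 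A quick consistency check is that the variance $\alpha_k^2 = 4f^2 k\,\overline{\mu}_r$ reduces to the $k=1$ value $4f^2\mu$ of Theorem~\ref{th:DeltaY}. The remaining steps are purely definitional and carry over unchanged.

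The main obstacle is that this argument is only as strong as Corollary~\ref{th:corollary1}, which is not itself established but is conditional on Conjecture~\ref{conjectureDeltaX}: the existence of a reparameterization concentrating $\Delta X$ into $k$ independent $\chi^2(1)$ contributions, together with the identification of a single effective mean $\overline{\mu}_r$. The delicate points are therefore (i) justifying that suitable averages $\overline{\mu}_{r,j}$ exist and lie close enough to be collapsed into one $\overline{\mu}_r$ for a non-constant reduced model, and (ii) confirming that the cross-product approximation survives away from the constant-model case, where the exact factorization of the parent mean in \eqref{eq:DeltaXLemma1} that underpins Lemma~\ref{lemma1} no longer holds. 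Once Corollary~\ref{th:corollary1} is granted, however, the passage to $\Delta Y \sim K_0(\alpha_k)$ is immediate and robust, since it relies only on the compounding structure and not on any feature of the parameterization.
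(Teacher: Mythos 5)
Your proposal follows essentially the same route as the paper: the paper's own justification of Corollary~\ref{th:corollary2} is precisely that it is a direct consequence of Corollary~\ref{th:corollary1}, obtained by repeating the compounding argument in the proof of Theorem~\ref{th:DeltaY} (perfect bin-wise correlation, independence across bins, then compounding $N(0,a^2\xi^2)$ over $\xi$), with $\xi \sim N(0, k\,\overline{\mu}_r)$ replacing $\xi \sim N(0,\mu)$ so that $\alpha_k = 2f\sqrt{k\,\overline{\mu}_r}$. You also correctly identify, as the paper does, that the result remains conditional on Conjecture~\ref{conjectureDeltaX} through Corollary~\ref{th:corollary1}, so no genuine gap is introduced beyond what the paper itself acknowledges.
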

Corollary~\ref{th:corollary2} would be proven as a direct consequence of corollary~\ref{th:corollary1}, following the same proof as Theorem~\ref{th:DeltaY}, and it would generalize Theorem~\ref{th:DeltaY} for $k>1$ and for any model parameterization. 
The results of Sec.~\ref{sec:DeltaYq1} are only established for a simple
constant baseline model, and for a nested model component with one additional parameter. The proposed generalization to any parameterization of the reduced model and $k>1$ rely on the applicability of Conjecture~\ref{conjectureDeltaX}.

\section{List of integrals}
\label{app:math}
The gamma function is defined as
\begin{equation}
    \Gamma(n+1) = \int_{0}^{\infty} x^n e-x \,dx \text{ for } n>-1
\end{equation}
with
\begin{equation}
    \int_{0}^{\infty} x^n e^{-a x} dx = \dfrac{\Gamma(n+1)}{a^{n+1}}.
\end{equation}
The lower and upper incomplete gamma functions are defined respectively as
\begin{equation}
    \gamma(s,x)=  \int_{0}^{x} x^{s-1} e^{-x} dx
\end{equation}
and
\begin{equation}
    \Gamma(s,x) =\int_{x}^{\infty} x^{s-1} e^{-x} dx
\end{equation}
with
\[ \Gamma(s) =  \gamma(s,x) + \Gamma(s,x).
\]

The error function is defined as
\begin{equation}
    \erf(x) = \dfrac{2}{\pi} \int_0^x e^{\displaystyle -t^2} dt
\end{equation}
with the imaginary error function defined as 
\begin{equation}
    \erfi(x) = -i\, \erf(i\,x) = \dfrac{2}{\pi} \int_0^x e^{\displaystyle t^2} dt.
\end{equation}


\bibliographystyle{aasjournal}


\end{document}